\documentclass[11pt,a4paper]{article}

\usepackage[utf8]{inputenc}
\usepackage[T1]{fontenc}
\usepackage{lmodern}
\usepackage[margin=1in]{geometry}
\usepackage{amsmath,amssymb,amsfonts,amsthm,mathtools}
\usepackage{bm}
\usepackage{graphicx}
\usepackage{booktabs}
\usepackage{array}
\usepackage{multirow}
\usepackage{xcolor}
\usepackage{enumitem}
\usepackage{float}
\usepackage{morefloats}
\usepackage{caption}
\usepackage{placeins}
\usepackage[numbers,sort&compress]{natbib}
\usepackage[colorlinks=true,linkcolor=blue!70!black,citecolor=blue!70!black,urlcolor=blue!70!black]{hyperref}

\theoremstyle{plain}
\newtheorem{theorem}{Theorem}

\theoremstyle{definition}

\DeclareMathOperator{\vecop}{vec}
\DeclareMathOperator{\Null}{Null}
\DeclareMathOperator{\SNF}{SNF}
\DeclareMathOperator{\Aut}{Aut}
\DeclareMathOperator{\Spec}{Spec}
\DeclareMathOperator{\diag}{diag}
\DeclareMathOperator{\rank}{rank}

\title{\textbf{Subgraph Isomorphism, Integer Linear Programming, and Lattices:\\A Linear Algebraic Framework}}
\author{\textbf{Bryant Gipson}\\Google}
\date{}

\begin{document}

\maketitle

\begin{abstract}
\noindent
The subgraph isomorphism problem (SGI) asks a deceptively simple question: is a graph $B$ found inside a parent graph $A$? Famously NP-complete and the subject of volumes of study, attempts to practically solve subgraph problems typically require specialized software or the focused application of problem-specific theoretical knowledge. Here we present a straightforward linear framework that neatly joins SGI to lattice theory and constrained satisfiability and, importantly, yields an easily implemented algorithm that naturally leverages consumer-grade hardware and off-the-shelf libraries to compete with state-of-the-art specialized solvers. We additionally show practical realizations in approachable settings, along with some benchmarks that reduce 6+ hour computations to seconds. Aimed at readability, we develop a series of computable algebraic invariants and demonstrate their application to familiar problems such as the Knight's Tour, with a focus on providing an accessible entry point for researchers in graph theory, abstract algebra, and students in general.
\end{abstract}

\section{Introduction}
\label{sec:intro}

The subgraph isomorphism (SGI) problem asks an apparently simple question: when is one graph contained inside another? Framed another way, given a pattern graph $G_B$ on $M$ vertices and a target graph $G_A$ on $N \ge M$ vertices, how hard is it to find a copy of $G_B$ inside $G_A$---or prove that no such copy exists? Easy to state, subgraph isomorphism is famously NP-complete \cite{garey1979computers}, generalizing both the graph isomorphism problem \cite{mckay2014practical,babai2016graph,grohe2020graph} and the Hamiltonian cycle problem, with direct relevance to chemical informatics, circuit layout, computer vision, and network analysis.

While both general and constrained versions of this problem have been extensively studied using deep insights from combinatorics, backtracking search (\texttt{Ullmann} \cite{ullmann1976algorithm}, \texttt{VF2} \cite{cordella2004subgraph}, \texttt{RI} \cite{bonnici2013subgraph}, \texttt{VF3} \cite{carletti2018challenging}, \texttt{FASTiso} \cite{fastiso2023}), constraint programming and cutting planes (\texttt{LAD} \cite{solnon2010alldifferent}, \texttt{Glasgow} \cite{mccreesh2015shape,mccreesh2020glasgow,gocht2020subgraph}), and database indexing (\texttt{TurboISO} \cite{han2013turboiso}, \texttt{DAF} \cite{han2019daf}), these algorithmic paradigms are typically developed separately from the spectral and Smith normal form ($\SNF$) invariants of algebraic graph theory \cite{godsil2001algebraic,wang2006sufficient,stanley2016smith,abiad2022codeterminantal,qiu2023smith}---which, conversely, have historically been restricted to square ($M = N$) graph isomorphism and cospectrality.

This paper bridges these two worlds through the lens of \textbf{linear algebra, integer linear programming (ILP), and lattice theory}. By treating the rectangular matrix commutator $P A - B P$ as a global linear operator rather than a collection of decoupled Boolean clauses, we:
\begin{enumerate}[leftmargin=1.8em]
    \item unify square graph isomorphism ($M = N$) and rectangular subgraph isomorphism ($M \le N$, both induced and non-induced) in a single $MN$-variable Kronecker operator framework (Section~\ref{sec:linear_algebra}) that naturally exploits modern multi-core CP-SAT and ILP solvers (Sections~\ref{sec:ilp_sgi}--\ref{sec:benchmarks_applications});
    \item construct an explicit unimodular cycle basis $L$ for the doubly-stochastic lattice that connects the commutator kernel directly to generalized cospectrality and affine lattice norm minimization (Sections~\ref{sec:spectral}--\ref{sec:snf_sieve}); and
    \item introduce a projection of the rectangular commutator onto the $(M-1)(N-1)$-dimensional \textbf{zero-sum difference lattice $\mathcal{L}^{(0)}_{M,N}$} that eliminates unmatched vertex asymmetries and boundary edges, yielding a new \textbf{Subgraph $\SNF$ Interlacing Theorem} (Theorem~\ref{thm:snf_interlacing}) and polynomial-time refutation certificates for Kuratowski obstructions, longest cycles, and Knight's Tours (Sections~\ref{sec:snf_sgi}--\ref{sec:knights_tour}).
\end{enumerate}

Aimed at approachability, the goal is to show how the classical linear commutator of graph isomorphism extends directly to rectangular subgraph isomorphism, offering a clean set of algebraic and computational in-roads to the problem with as simple a presentation as possible.

\section{Subgraph Isomorphism and Linear Algebra}
\label{sec:linear_algebra}

\subsection{From Square Permutations to the Kronecker Commutator ($M = N$)}
\label{sec:kronecker_gi}

Consider two graphs, $G_A$ and $G_B$, with corresponding adjacency matrices $A \in \{0,1\}^{N \times N}$ and $B \in \{0,1\}^{M \times M}$, respectively. In the special case where both graphs have the same number of vertices ($M = N$), Graph isomorphism between $G_A$ and $G_B$ ($G_A \simeq G_B$) holds if and only if there exists a square permutation matrix $P \in \{0,1\}^{N \times N}$ such that:
\begin{equation*}
P A P^T = B
\end{equation*}
Because $P$ is orthogonal ($P^T = P^{-1}$), multiplying on the right by $P$ yields the classical linear commutator equation:
\begin{equation}
\label{eq:commutator}
P A - B P = 0
\end{equation}
Though previous work has often framed Equation~\eqref{eq:commutator} as a doubly-stochastic continuous relaxation \cite{tinhofer1986graph,ramana1994fractional,arvind2017color} or via $\ell_1$ penalty linearizations for weighted graph matching and edit distance \cite{almohamad1993linear,justice2006binary}, one can utilize the vectorization operator ($\vecop$) and the Kronecker product ($\otimes$) \cite{horn1991topics} to arrive at an exact linear system. Recall that for matrices $X, Y,$ and $Z$ of compatible dimensions:
\begin{equation}
\label{eq:vec_identity}
\vecop(X Y Z) = (X \otimes Z^T) \cdot \vecop(Y)
\end{equation}
Here, $\vecop(Y)$ stacks the rows of matrix $Y$ into a single column vector, and the Kronecker product $X \otimes Z^T$ forms a block matrix where each entry of $X$ multiplies the entire matrix $Z^T$. Applying Equation~\eqref{eq:vec_identity} to Equation~\eqref{eq:commutator} ($I P A - B P I = 0$, for symmetric undirected graphs where $A = A^T$) yields the homogeneous linear system:
\begin{equation}
\label{eq:kronecker_gi}
(I \otimes A - B \otimes I) \cdot \vecop(P) = 0
\end{equation}
where $\vecop(P)$ is an $N^2 \times 1$ column vector representing the vectorized permutation matrix $P$. Note that this re-framing in itself remains identical to the linear graph isomorphism approaches outlined in \cite{tinhofer1986graph,ramana1994fractional} but, importantly, unlocks a richer array of algebraic tools, as we will see in subsequent sections. Equation~\eqref{eq:kronecker_gi} therefore forms the basis of our linear algebraic approach, as it follows directly from Equation~\eqref{eq:kronecker_gi} that all permutation matrices establishing $G_A \simeq G_B$ exist within the nullspace:
\begin{equation*}
\Null(I \otimes A - B \otimes I)
\end{equation*}

\subsection{Subgraph Isomorphism}
\label{sec:sgi_commutator}
Subgraph isomorphism generalizes this question from ``are two graphs the same?'' to ``is this graph a subgraph of the other?''. Formally, for graphs $G_A$ on $N$ vertices and $G_B$ on $M$ (with $M < N$), one is searching for an injective mapping $\pi: V_B \to V_A$ embedding the smaller graph $G_B$ into the larger target $G_A$.
In the linear formulation, this mapping is represented by a rectangular binary matrix $P \in \{0,1\}^{M \times N}$ where $P_{i,v} = 1 \iff \pi(i) = v$. For $P$ to represent a valid one-to-one injection, each row must sum to $1$ and each column to at most $1$:
\begin{equation*}
P \mathbf{1}_N = \mathbf{1}_M \qquad \text{and} \qquad P^T \mathbf{1}_M \le \mathbf{1}_N
\end{equation*}
where $\mathbf{1}_N$ denotes the $N \times 1$ all-ones vector. Whereas traditional integer programming approaches to subgraph matching and graph edit distance introduce $O(M^2 N^2)$ edge-product variables or dummy-vertex padding \cite{justice2006binary,bahiense2012maximum,lerouge2017new}, even though $P$ is now rectangular and no longer orthogonal ($P^T P \neq I_N$), the linear commutator $P A - B P$ from Equation~\eqref{eq:commutator} still directly captures both \textbf{non-induced} subgraph isomorphism ($G_B \subseteq G_A$, where $G_A$ must contain all edges of $G_B$ but may include extra edges) and \textbf{induced} subgraph isomorphism ($G_B \subseteq_{\text{ind}} G_A$, where both edges and non-edges must be preserved, i.e., $G_B \subseteq G_A$ and $\bar{G}_B \subseteq \bar{G}_A$ for the complement graphs $\bar{A} = J_N - I_N - A$ and $\bar{B} = J_M - I_M - B$, with $J_N = \mathbf{1}_N \mathbf{1}_N^T$ the all-ones matrix):
\begin{theorem}[Rectangular Commutator Inequalities for Subgraph Isomorphism]
\label{thm:rect_commutator}
Let $P \in \{0,1\}^{M \times N}$ be a rectangular injection matrix ($P \mathbf{1}_N = \mathbf{1}_M$, $P^T \mathbf{1}_M \le \mathbf{1}_N$). Then:
\begin{enumerate}[label=\arabic*.]
    \item $P$ defines a \textbf{non-induced subgraph isomorphism} ($G_B \subseteq G_A$) if and only if:
    \begin{subequations}
    \label{eq:sgi_bounds}
    \begin{equation}
    \label{eq:sgi_noninduced}
    P A - B P \;\ge\; 0
    \end{equation}
    \item $P$ defines an \textbf{induced subgraph isomorphism} ($G_B \subseteq_{\text{ind}} G_A$) if and only if both $G_B$ and its complement $\bar{G}_B$ satisfy the commutator inequality:
    \begin{equation}
    \label{eq:sgi_induced}
    P A - B P \;\ge\; 0 \qquad \text{and} \qquad P \bar{A} - \bar{B} P \;\ge\; 0
    \end{equation}
    \end{subequations}
\end{enumerate}
\end{theorem}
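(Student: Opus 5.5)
The plan is to compute the entries of the commutator $PA - BP$ directly in terms of the injection $\pi$ encoded by $P$, and to read off both equivalences entrywise. Since $P_{i,u} = 1$ exactly when $u = \pi(i)$, we have
\begin{equation*}
(PA)_{i,v} = \sum_{u} P_{i,u} A_{u,v} = A_{\pi(i),v}.
\end{equation*}
Since each column of $P$ has at most one $1$, we also have
\begin{equation*}
(BP)_{i,v} = \sum_j B_{i,j} P_{j,v} =
\begin{cases} B_{i,j} & \text{if } v = \pi(j), \\ 0 & \text{if } v \notin \pi(V_B). \end{cases}
\end{equation*}
The columns of the $M \times N$ matrix therefore split into two types: the $M$ matched columns $v = \pi(j)$ and the $N - M$ unmatched columns.

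\textbf{Part 1.} On an unmatched column the entry equals $A_{\pi(i),v} \ge 0$, so the inequality~\eqref{eq:sgi_noninduced} holds there automatically. This is the point where the rectangular case departs from the square one: boundary edges to unmatched vertices impose no constraint, which is why the condition is an inequality rather than an equality. On a matched column $v = \pi(j)$, the inequality reads $A_{\pi(i),\pi(j)} \ge B_{i,j}$ for all $i, j$. Because all entries are $0/1$, this says exactly that every edge $ij$ of $G_B$ is sent to an edge $\pi(i)\pi(j)$ of $G_A$. On the diagonal $i = j$ both sides are $0$, since the graphs have no loops. Combined with the injectivity of $\pi$, this is precisely the definition of a non-induced subgraph isomorphism, and both directions follow at once.

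\textbf{Part 2.} I will apply Part 1 verbatim to the complement pair $(\bar{A}, \bar{B})$. The entrywise computation used only that the matrices are $0/1$ with zero diagonal, and $\bar{A} = J_N - I_N - A$ and $\bar{B} = J_M - I_M - B$ have both properties. Hence $P\bar{A} - \bar{B}P \ge 0$ is equivalent to $\bar{A}_{\pi(i),\pi(j)} \ge \bar{B}_{i,j}$ for all $i, j$. For $i \ne j$, injectivity gives $\pi(i) \ne \pi(j)$, so this says that non-edges of $G_B$ map to non-edges of $G_A$, i.e.\ $B_{ij} = 0 \Rightarrow A_{\pi(i)\pi(j)} = 0$. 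Together with the edge-preservation condition from Part 1, this yields $A_{\pi(i)\pi(j)} = B_{ij}$ for all $i \ne j$, which is the induced condition. The converse direction is immediate.

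The only step needing care is this role of injectivity: the column sums $P^T \mathbf{1}_M \le \mathbf{1}_N$ are used both to evaluate $BP$ entrywise and to ensure that $\pi(i) \ne \pi(j)$, so that the diagonal $-I_N$ term in $\bar{A}$ is never accidentally invoked. Beyond that, I expect no real obstacle; the proof is a bookkeeping exercise over matched and unmatched columns.
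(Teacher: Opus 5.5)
Your proposal is correct and takes essentially the same route as the paper. You expand $(PA-BP)_{i,v} = A_{\pi(i),v} - \sum_j B_{i,j}P_{j,v}$, split into unmatched columns (where the entry is $A_{\pi(i),v}\ge 0$ automatically) and matched columns $v=\pi(j)$ (where the entry is $A_{\pi(i),\pi(j)} - B_{i,j}$), and then rerun the argument on $\bar A,\bar B$, using $\pi(i)\neq\pi(j)$ for $i\neq j$ to obtain $A_{\pi(i)\pi(j)} \le B_{ij}$; your explicit treatment of the diagonal entries and of exactly where injectivity enters is slightly more careful than the paper's write-up.
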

\begin{proof}
Let pattern vertex $i$ map to target vertex $\pi(i) \in V_A$. Expanding entry $(i,v)$ of the rectangular commutator $P A - B P$ gives:
\begin{equation*}
(P A - B P)_{i,v} = \sum_{u=1}^N P_{i,u} A_{u,v} - \sum_{j=1}^M B_{i,j} P_{j,v} = A_{\pi(i), v} - \sum_{j=1}^M B_{i,j} P_{j,v}
\end{equation*}
Because each column $v$ of $P$ contains at most one non-zero entry, there are two cases:
\begin{itemize}
    \item \textbf{Part 1 (Target vertex $v$ is unmatched, $P_{j,v} = 0\;\forall j$):}
    The second term vanishes and $(P A - B P)_{i,v} = A_{\pi(i),v} \ge 0$ holds automatically (and identically $(P \bar{A} - \bar{B} P)_{i,v} = \bar{A}_{\pi(i),v} \ge 0$). Unmatched target vertices impose no constraints on boundary edges leaving the embedded subgraph.
    \item \textbf{Part 2 (Target vertex $v$ is matched to a unique pattern vertex $j^* = \pi^{-1}(v)$, $P_{j^*,v} = 1$):}
    Here all other entries in column $v$ are zero, so the commutator entry simplifies to:
    \begin{equation*}
    (P A - B P)_{i,v} = A_{\pi(i), \pi(j^*)} - B_{i, j^*} = (P A P^T - B)_{i, j^*} \in \{-1, \, 0, \, +1\}
    \end{equation*}
    Enforcing $P A - B P \ge 0$ forbids $-1$, which holds if and only if $A_{\pi(i),\pi(j^*)} \ge B_{i,j^*}$ for all mapped pairs, establishing non-induced SGI (Equation~\eqref{eq:sgi_noninduced}). By the exact same argument on the complement graphs, enforcing $P \bar{A} - \bar{B} P \ge 0$ holds if and only if $\bar{A}_{\pi(i),\pi(j^*)} \ge \bar{B}_{i,j^*}$ (or equivalently $A_{\pi(i),\pi(j^*)} \le B_{i,j^*}$) for all $i \neq j^*$. Combining both inequalities forces $A_{\pi(i),\pi(j^*)} = B_{i,j^*}$ across all matched pairs, establishing induced SGI (Equation~\eqref{eq:sgi_induced}).
\end{itemize}
\end{proof}

Note that for any binary row-stochastic matrix $P \in \{0,1\}^{M \times N}$ ($P\mathbf{1}_N = \mathbf{1}_M$) and binary adjacency matrices $A, B \in \{0,1\}$, the upper bound $(P A - B P)_{i,v} = A_{\pi(i),v} - (BP)_{i,v} \le A_{\pi(i),v} \le 1$ (and similarly for $\bar{A}, \bar{B}$) holds automatically at every entry $(i,v)$ and serves to tighten continuous LP relaxations. Written out entry-by-entry, each row of $P A - B P \ge 0$ enforces $\sum_{u \in N_A(v)} P_{i,u} \ge \sum_{j \in N_B(i)} P_{j,v}$, recovering the standard linear adjacency-implication inequalities used in 0--1 integer and pseudo-Boolean subgraph encodings \cite{almohamad1993linear,gocht2020subgraph}. Crucially, however, preserving these inequalities in their global matrix-operator form $P A - B P$ rather than treating them as isolated Boolean clauses unlocks a direct algebraic bridge to \textbf{lattice projection and Smith normal form invariant-factor interlacing} across the entire operator (Section~\ref{sec:lattices}). (We also note in passing that the matrix-operator form immediately implies \textbf{polynomial walk monotonicity}: because $P, A, B \ge 0$, $P A \ge B P$ implies $P A^k \ge B^k P$ and hence $P f(A) - f(B) P \ge 0$ for any polynomial $f$ with non-negative coefficients. While all empirical benchmarks in Sections~\ref{sec:ilp_sgi}--\ref{sec:benchmarks_applications} use only the base $k=1$ system below, higher-power walk inequalities offer a natural family of optional auxiliary linear constraints that can be appended without introducing additional decision variables.)

Applying the vectorization identity (Equation~\eqref{eq:vec_identity}) to Equations~\eqref{eq:sgi_noninduced} and \eqref{eq:sgi_induced} yields the $MN \times MN$ Kronecker operator form of subgraph isomorphism:
\begin{subequations}
\label{eq:sgi_kronecker}
\begin{align}
\text{(Non-Induced)} \qquad 0 \;&\le\; (I_M \otimes A - B \otimes I_N) \cdot \vecop(P) \;\le\; 1 \label{eq:sgi_kron_noninduced} \\[4pt]
\text{(Induced, in addition to (\ref{eq:sgi_kron_noninduced}))} \qquad 0 \;&\le\; (I_M \otimes \bar{A} - \bar{B} \otimes I_N) \cdot \vecop(P) \;\le\; 1 \label{eq:sgi_kron_induced}
\end{align}
\end{subequations}

\section{Spectral Properties}
\label{sec:spectral}

\subsection{Eigenvalue Structure}
\label{sec:eigenvalues}

Returning to the Kronecker operator $I \otimes A - B \otimes I$ in Equation~\eqref{eq:kronecker_gi}, its eigenvalues and eigenvectors follow directly from the classical Sylvester--Kronecker spectrum of $A$ and $B$ \cite{horn1991topics,godsil2001algebraic}.

\begin{theorem}[Spectrum of $I \otimes A - B \otimes I$ \cite{horn1991topics}] \label{thm:eigenvalues}
If $\{\lambda_{A,i}\}$ and $\{\lambda_{B,j}\}$ are the eigenvalues of symmetric adjacency matrices $A$ and $B$ with corresponding eigenvectors $v_i$ and $u_j$, then $I \otimes A - B \otimes I$ has eigenvalues $\{\lambda_{A,i} - \lambda_{B,j}\}$ for all $i,j$ with corresponding eigenvectors $w_{i,j} = u_j \otimes v_i$.
\end{theorem}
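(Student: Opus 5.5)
The plan is a direct computation with the mixed-product property of the Kronecker product, $(X \otimes Y)(Z \otimes W) = (XZ) \otimes (YW)$, followed by a counting argument to show that these are all the eigenvalues. Take the identities as $I_M$ and $I_N$, so the operator $K = I_M \otimes A - B \otimes I_N$ acts on $\mathbb{R}^{MN}$ (the square case $M=N$ being a special case). Let $A v_i = \lambda_{A,i} v_i$ and $B u_j = \lambda_{B,j} u_j$, and set $w_{i,j} = u_j \otimes v_i$. I then compute the two terms separately:
\begin{equation*}
(I_M \otimes A)(u_j \otimes v_i) = (I_M u_j) \otimes (A v_i) = \lambda_{A,i}\,(u_j \otimes v_i),
\qquad
(B \otimes I_N)(u_j \otimes v_i) = (B u_j) \otimes (I_N v_i) = \lambda_{B,j}\,(u_j \otimes v_i).
\end{equation*}
Subtracting gives $K w_{i,j} = (\lambda_{A,i} - \lambda_{B,j})\, w_{i,j}$. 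Since $u_j \ne 0$ and $v_i \ne 0$ imply $u_j \otimes v_i \ne 0$, each $w_{i,j}$ is a genuine eigenvector.

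Next I show that these exhaust the spectrum with the correct multiplicities. Since $A$ and $B$ are real symmetric, the spectral theorem lets us choose orthonormal eigenbases $\{v_i\}_{i=1}^N$ and $\{u_j\}_{j=1}^M$. The inner products factor as
\begin{equation*}
\langle u_j \otimes v_i,\, u_{j'} \otimes v_{i'} \rangle = \langle u_j, u_{j'}\rangle \langle v_i, v_{i'}\rangle,
\end{equation*}
so the $MN$ vectors $w_{i,j}$ are orthonormal and form a basis of $\mathbb{R}^{MN}$. It follows that $K$ is diagonalized by this basis. Equivalently, writing $A = V\Lambda_A V^T$ and $B = U \Lambda_B U^T$, we get
\begin{equation*}
K = (U \otimes V)(I \otimes \Lambda_A - \Lambda_B \otimes I)(U \otimes V)^T.
\end{equation*}
Hence the full multiset of eigenvalues is exactly $\{\lambda_{A,i} - \lambda_{B,j}\}$, counted with multiplicity.

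The main point needing care is matching the vectorization convention. The paper's $\vecop$ stacks rows, which is what makes $\vecop(PA) = (I \otimes A)\vecop(P)$ and places the $B$-factor on the left. So $w_{i,j} = u_j \otimes v_i$ corresponds to $\vecop(u_j v_i^T)$, which I would check directly: $u_j v_i^T A - B u_j v_i^T = (\lambda_{A,i} - \lambda_{B,j})\, u_j v_i^T$, using $A = A^T$. Apart from this check and the completeness argument, the proof is routine.
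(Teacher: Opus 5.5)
Your proposal is correct and follows the paper's route: the mixed-product property applied separately to $I\otimes A$ and $B\otimes I$ on $w_{i,j}=u_j\otimes v_i$, followed by subtraction. Your extra step goes beyond the paper's proof. Choosing orthonormal eigenbases makes the $MN$ vectors $u_j\otimes v_i$ an orthonormal basis, so $I\otimes A-B\otimes I=(U\otimes V)(I\otimes\Lambda_A-\Lambda_B\otimes I)(U\otimes V)^T$. This proves that these are all the eigenvalues, with the right multiplicities, which the paper only asserts (``exact eigenvalues'') and which Theorem~\ref{thm:cospectral} relies on when it counts nullspace dimensions.
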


\begin{proof}
Recall the mixed-product property of Kronecker products: $(P \otimes Q)(R \otimes S) = (PR) \otimes (QS)$. Let $w_{i,j} = u_j \otimes v_i$, where $u_j$ and $v_i$ are the $j$-th and $i$-th eigenvectors of $B$ and $A$, respectively:
\begin{equation*}
(I \otimes A) w_{i,j} = (I \otimes A)(u_j \otimes v_i) = (I u_j) \otimes (A v_i) = u_j \otimes (\lambda_{A,i} v_i) = \lambda_{A,i} w_{i,j}
\end{equation*}
And, by symmetry:
\begin{equation*}
(B \otimes I) w_{i,j} = (B u_j) \otimes (I v_i) = (\lambda_{B,j} u_j) \otimes v_i = \lambda_{B,j} w_{i,j}
\end{equation*}
Because $w_{i,j}$ is an eigenvector of both $(I \otimes A)$ and $(B \otimes I)$, subtracting the two equations yields:
\begin{equation*}
\big((I \otimes A) - (B \otimes I)\big) w_{i,j} = (\lambda_{A,i} - \lambda_{B,j}) w_{i,j}
\end{equation*}
Thus $\{\lambda_{A,i} - \lambda_{B,j}\}$ for all $i,j$ are the exact eigenvalues of the Kronecker commutator.
\end{proof}

An immediate and direct consequence of Theorem~\ref{thm:eigenvalues} is a clean characterization of cospectral graphs \cite{godsil1982constructing,vandam2003which} via nullspace dimensions:

\begin{theorem}[Nullspace Dimension and Cospectrality] 
\label{thm:cospectral}
Let $N_{AB} = \dim \Null(I \otimes A - B \otimes I)$. Then:
\begin{equation}
\label{eq:cospectral}
N_{AA} = N_{AB} = N_{BB} \iff \Spec(G_A) = \Spec(G_B)
\end{equation}
\end{theorem}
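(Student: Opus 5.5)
The plan is to turn each nullspace dimension into a count of coincident eigenvalue pairs, and then compare the three counts using a sum-of-squares identity. Since $A$ and $B$ are real symmetric, choose orthonormal eigenbases $\{v_i\}$ of $A$ and $\{u_j\}$ of $B$. By Theorem~\ref{thm:eigenvalues}, the vectors $w_{i,j}=u_j\otimes v_i$ are eigenvectors of $I\otimes A-B\otimes I$ with eigenvalues $\lambda_{A,i}-\lambda_{B,j}$. By the mixed-product property they are orthonormal, and there are $MN$ of them, so they form a complete eigenbasis. The operator is therefore diagonalizable, and its nullspace dimension is exactly the number of index pairs with $\lambda_{A,i}=\lambda_{B,j}$.

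Next, for each real $\mu$ let $m_A(\mu)$ and $m_B(\mu)$ denote the multiplicities of $\mu$ as an eigenvalue of $A$ and $B$; these are zero if $\mu$ is not an eigenvalue. Grouping the coincident pairs by their common value gives
\begin{equation*}
N_{AB}=\sum_{\mu} m_A(\mu)\,m_B(\mu),\qquad N_{AA}=\sum_\mu m_A(\mu)^2,\qquad N_{BB}=\sum_\mu m_B(\mu)^2,
\end{equation*}
where each sum is finite. Applying the same argument to the pairs $(A,A)$ and $(B,B)$ justifies the last two formulas.

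The key step is the identity
\begin{equation*}
N_{AA}+N_{BB}-2N_{AB}=\sum_\mu\bigl(m_A(\mu)-m_B(\mu)\bigr)^2\;\ge\;0 .
\end{equation*}
For ($\Rightarrow$), suppose $N_{AA}=N_{AB}=N_{BB}$. Then the left-hand side vanishes, so $m_A(\mu)=m_B(\mu)$ for every $\mu$. This means $\Spec(G_A)=\Spec(G_B)$ as multisets, and in particular $M=N$ by counting. For ($\Leftarrow$), suppose the spectra agree as multisets. Then $m_A\equiv m_B$, and the three displayed sums coincide termwise.

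No step is a serious obstacle. The only point needing care is the first one: nullspace dimension equals algebraic multiplicity of $0$ only because the Kronecker operator is diagonalizable. Here this follows from the symmetry of $A$ and $B$: the operator is itself symmetric, and Theorem~\ref{thm:eigenvalues} supplies a full orthonormal eigenbasis. I would state this explicitly, since for non-symmetric (directed) adjacency matrices the count could fail.
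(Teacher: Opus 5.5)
Your proposal is correct and follows essentially the same route as the paper: each nullspace dimension becomes a count of coincident eigenvalue pairs via Theorem~\ref{thm:eigenvalues} and orthogonal diagonalizability, and then the identity $N_{AA}+N_{BB}-2N_{AB}=\sum_\mu (m_A(\mu)-m_B(\mu))^2$ finishes the argument. The paper leaves implicit two details that you spell out: why the $w_{i,j}$ form a complete orthonormal eigenbasis, and the trivial converse direction.
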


That is, if the rank of the Kronecker commutator is consistent across $I \otimes A - A \otimes I$, $I \otimes B - B \otimes I$, and $I \otimes A - B \otimes I$, then $G_A$ is cospectral to $G_B$.

\begin{proof}
Let $\{\mu_1, \dots, \mu_k\}$ denote the union of distinct eigenvalues of the symmetric matrices $A$ and $B$, with multiplicities $m_r \ge 0$ in $\Spec(G_A)$ and $n_r \ge 0$ in $\Spec(G_B)$. Because $A$ and $B$ are orthogonally diagonalizable, $\dim\Null(I \otimes A - B \otimes I)$ equals the number of pairs $(i,j)$ for which $\lambda_{A,i} - \lambda_{B,j} = 0$ by Theorem~\ref{thm:eigenvalues}. Thus $N_{AB} = \sum_{r=1}^k m_r n_r$, and similarly $N_{AA} = \sum_{r=1}^k m_r^2$ and $N_{BB} = \sum_{r=1}^k n_r^2$. Subtracting $2N_{AB}$ from $N_{AA} + N_{BB}$ yields the sum of squares:
\begin{equation*}
N_{AA} - 2 N_{AB} + N_{BB} \;=\; \sum_{r=1}^k (m_r - n_r)^2 \;\ge\; 0.
\end{equation*}
Consequently, $N_{AA} = N_{AB} = N_{BB}$ implies $\sum_{r=1}^k (m_r - n_r)^2 = 0$, which holds if and only if $m_r = n_r$ for every $r \in \{1, \dots, k\}$, i.e., $\Spec(G_A) = \Spec(G_B)$.
\end{proof}

\section{Graph Isomorphism as a Lattice Problem}
\label{sec:gi_lattice_intro}

Though it has been known for some time that lattice isomorphism is at least as hard as graph isomorphism \cite{sikiric2009complexity,haviv2014lattice,ducas2022lattice}, the construction we present in the next section provides an explicit construction---and some novel, surprising outcomes from that construction. Specifically, we show that: there exists an explicit lattice construction for any graph isomorphism problem, graph isomorphism maps to a tightly bound shortest vector problem (SVP), and the Smith normal form allows some graph invariants that allow a powerful set of polynomial-time a priori proofs about (sub)graph isomorphism generally, and that graphs with identical SNF invariants are Generalized Cospectral. We follow with some practical realizations of these properties in some familiar settings, including some explicit linear constructions of permutation matrices directly from the kernels of these systems.

\subsection{Graph Isomorphism and Lattices}
\label{sec:lattices}

Equation~\eqref{eq:kronecker_gi} (and more generally Equation~\eqref{eq:sgi_kronecker}) introduced a natural integer linear programming (ILP) framing over the space of possible permutations that would satisfy forms of graph (sub)isomorphism \cite{berkholz2015limitations,berkholz2017diophantine,grohe2022homomorphism}. Without loss of generality, we first analyze the square graph isomorphism framing of Equation~\eqref{eq:kronecker_gi}:
\[
(I \otimes A - B \otimes I) \cdot \vecop(P) = 0
\]
Which is to say, if $G_A \simeq G_B$, then a $\{0,1\}^{N^2}$ vector exists satisfying the above. Of course, one knows that $P$ is a permutation matrix and a considerably tighter framing is possible. Framed differently, any solution to this equation lies in the intersection of the continuous $\mathbb{R}^{N^2}$ kernel of the Kronecker commutator and the integer lattice defined by all possible permutation bases. The outer convex hull containing the space of these solutions is known as the Birkhoff Polytope \cite{birkhoff1946three}. 

\subsection{Doubly-Stochastic Permutation Basis $L$}
\label{sec:permutation_basis}

The space of all permutation matrices admits a surprisingly clean geometric construction via discrete lattices. 

Recall that a lattice is a discrete additive subgroup of $\mathbb{R}^{N^2}$ generated by integer combinations of basis vectors $y = L z$ ($z \in \mathbb{Z}^d$). In $N \times N$ matrix space ($\dim = N^2$), requiring all $N$ row sums and all $N$ column sums of a matrix $X$ to equal a common scalar $c$ ($X\mathbf{1} = c\mathbf{1}$ and $X^T\mathbf{1} = c\mathbf{1}$) imposes $(N - 1) + (N - 1) = 2N - 2$ independent linear equations. Consequently, the subspace of generalized doubly-stochastic matrices has dimension \cite{birkhoff1946three}:
\begin{equation*}
d \;=\; N^2 - (2N - 2) \;=\; \mathbf{(N - 1)^2 + 1}
\end{equation*}
Rather than enforcing row and column sum equations externally, we can construct an explicit integral basis matrix $L \in \{0,1\}^{N^2 \times d}$ directly from elementary permutation cycles anchored at vertex $0$:
\begin{enumerate}[label=\arabic*.]
    \item \textbf{The Identity ($1$ column):} $\vecop(I_N)$.
    \item \textbf{Primary $2$-Cycles ($N - 1$ columns):} The vectorized star transpositions $\vecop(P_{(0, i)})$ for $i \in \{1, \dots, N-1\}$.
    \item \textbf{Secondary $3$-Cycles ($(N - 1)(N - 2)$ columns):} The vectorized anchored $3$-cycles $\vecop(P_{(0, i, j)})$ for all ordered pairs $1 \le i \neq j \le N-1$.
\end{enumerate}

Stacking these $1 + (N - 1) + (N - 1)(N - 2) = (N - 1)^2 + 1$ elementary cycle permutations forms an explicit unimodular $\{0,1\}$ basis matrix $L \in \{0,1\}^{N^2 \times d}$ for the integer doubly-stochastic lattice:
\begin{equation}
\label{eq:lattice_span}
L(\mathbb{Z}^d) \;=\; \left\{ \vecop(X) \in \mathbb{Z}^{N^2} \;:\; X\mathbf{1} = c\mathbf{1} \;\text{ and }\; X^T\mathbf{1} = c\mathbf{1}, \quad \text{where } c = \mathbf{1}_d^T z = \sum_{k=1}^d z_k \in \mathbb{Z} \right\}
\end{equation}
Because every column of $L$ is a permutation matrix with row and column sum $1$, substituting $\vecop(X) = L z$ ($z \in \mathbb{Z}^d$) into Equation~\eqref{eq:kronecker_gi} automatically enforces equal row and column sums and projects the Kronecker commutator directly onto the permutation lattice:
\begin{equation}
\label{eq:lattice_proj}
(I \otimes A - B \otimes I) L \cdot z = 0  \qquad (z \in \mathbb{Z}^d)       
\end{equation}
or, left-multiplying by $L^T$ to obtain a square symmetric $d \times d$ operator on $\mathbb{Z}^d$:
\begin{equation}
\label{eq:lattice_proj_sym}
L^T (I \otimes A - B \otimes I) L \cdot z = 0 \qquad (z \in \mathbb{Z}^d)       
\end{equation}

Though we'll proceed with Eq \ref{eq:lattice_proj} for the next section, Eq \ref{eq:lattice_proj_sym} will form the basis of relations through the most of rest of this paper we'll often refer to it by short hand $K(A,B)$, though will keep it in its expanded form when readability dictates.

\begin{figure*}[htbp]
    \centering
    \includegraphics[width=\textwidth]{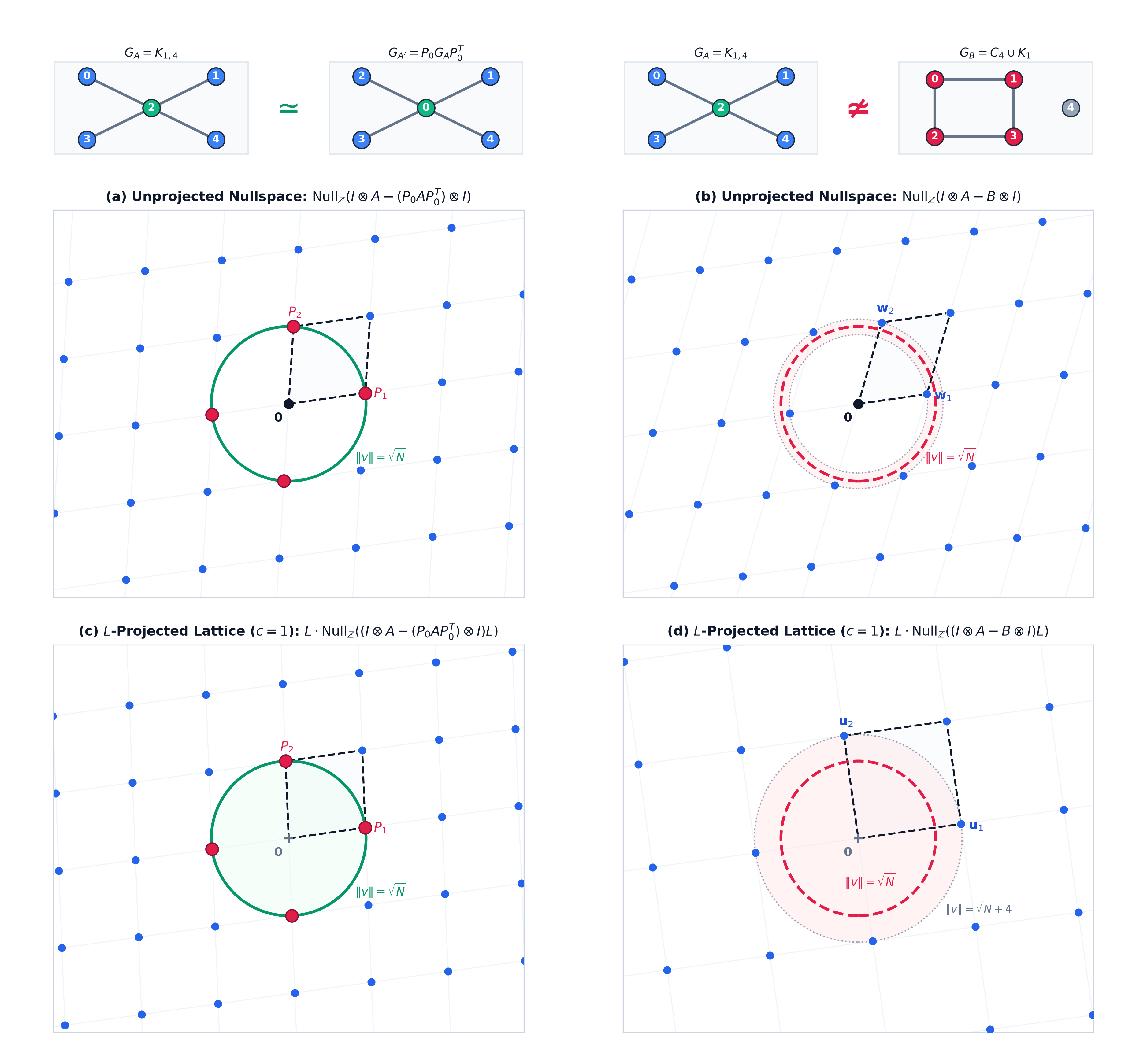}
    \caption{Isometric 2D lattice slices comparing the unprojected integer nullspace $\Null_{\mathbb{Z}}(I \otimes A - B \otimes I)$ (top row) and the Theorem~\ref{thm:shortest_norm} doubly-stochastic lattice $L \cdot \Null_{\mathbb{Z}}((I \otimes A - B \otimes I)L)$ on the unit-sum slice $c = 1$ (bottom row) against the permutation norm sphere $\|v\|_2 = \sqrt{N}$ ($N=5$): (a) unprojected isomorphic slice $\Null_{\mathbb{Z}}(I \otimes A - P_0 A P_0^T \otimes I)$ intersecting $\|v\|_2 = \sqrt{5}$ at $P_1, P_2$; (b) unprojected cospectral slice $\Null_{\mathbb{Z}}(I \otimes A - B \otimes I)$ with interior $c = 0$ vectors $\mathbf{w}_1$ ($\|w_1\|_2 = 2 < \sqrt{5}$) and empty annulus $[\sqrt{4}, \sqrt{6}]$; (c) doubly-stochastic isomorphic slice ($c = 1$), where the interior $\|v\|_2 < \sqrt{5}$ is empty, $P_1, P_2$ lie on $\|v\|_2 = \sqrt{5}$, and non-permutations satisfy $\|v\|_2 \ge \sqrt{N+4}$; (d) doubly-stochastic cospectral slice ($c = 1$), where $c = 0$ interior vectors are eliminated and all lattice points $\mathbf{u}_1, \mathbf{u}_2$ are pushed outside to $\|v\|_2 \ge \sqrt{N+4} = 3$, leaving the entire ball around $\|v\|_2 = \sqrt{5}$ empty.}
    \label{fig:cospectral_lattice}
\end{figure*}

\subsection{Projection onto the Lattice and the Smith Normal Form}
\label{sec:snf_sieve}

Because Equation~\eqref{eq:lattice_proj_sym} is a homogeneous integer linear system over $\mathbb{Z}$ \cite{berkholz2015limitations,berkholz2017diophantine}, we can compute its \textbf{Smith normal form ($\SNF$)} \cite{smith1861systems,newman1972integral,stanley2016smith} in polynomial time \cite{kannan1979polynomial,storjohann1996near}:
\begin{equation}
\label{eq:snf}
U S V = \SNF\!\big((I \otimes A - B \otimes I) L\big)
\end{equation}
where $U$ and $V$ are unimodular integer matrices ($\det U, \det V \in \{\pm 1\}$) and $S = \diag(s_1, \dots, s_r, 0, \dots, 0)$ is diagonal with elementary integer divisors satisfying $s_1 \mid s_2 \mid \dots \mid s_r$. 

\begin{minipage}{\textwidth}

For example, on the smallest cospectral non-isomorphic pair ($N = 5$, Figure~\ref{fig:cospectral_lattice})---the star $G_A = K_{1,4}$ and the $4$-cycle plus isolated vertex $G_B = C_4 \cup K_1$, which share spectrum $\{-2, 0, 0, 0, +2\}$---the unprojected $25 \times 25$ commutator nullspaces are of equal dimension ($N_{AA} = N_{AB} = N_{BB} = 11$ in Theorem~\ref{thm:cospectral}). Projecting onto the $17$-dimensional doubly-stochastic basis $L$, however, immediately separates both their Smith normal form diagonals and their nullspace dimensions (Theorem~\ref{thm:cospectral_lattice}):

\begin{equation*}
\begin{aligned}
\diag(S_{AA}) &= [\,1^6,\; 3,\; \mathbf{0^{10}}\,] \quad &\implies \quad \tilde{N}_{AA} &= \mathbf{10} \quad (L V_0 \in \mathbb{Z}^{25 \times 10}), \\
\diag(S_{BB}) &= [\,1^8,\; 2^3,\; \mathbf{0^6}\,] \quad &\implies \quad \tilde{N}_{BB} &= \mathbf{6} \quad (L V_0 \in \mathbb{Z}^{25 \times 6}), \\
\diag(S_{AB}) &= [\,1^{11},\; \mathbf{0^6}\,] \quad &\implies \quad \tilde{N}_{AB} &= \mathbf{6} \quad (L V_0 \in \mathbb{Z}^{25 \times 6}).
\end{aligned}
\end{equation*}

Note that the SNF acts as an integer analog of Singular Value Decomposition and that the $0$ valued invariants form a basis for integer valued solutions to \ref{eq:snf}. Specifically, if $V_0 \in \mathbb{Z}^{d \times k}$ denotes the columns of $V$ corresponding to zeros on the diagonal of $S$, mapping $V_0$ back to the ambient space via $L$ yields the exact integral generator for the intersection of the commutator nullspace with the doubly-stochastic lattice:
\begin{equation}
\label{eq:rho_gen}
\vecop(P) \;=\; (L \cdot V_0) \, w \qquad (L \cdot V_0 \in \mathbb{Z}^{N^2 \times k}, \;\; w \in \mathbb{Z}^k)
\end{equation}

Having a look at the basis vectors that result from Eq \ref{eq:rho_gen} for our classic cospectral example is illuminating

\centering
\includegraphics[width=0.9\textwidth]{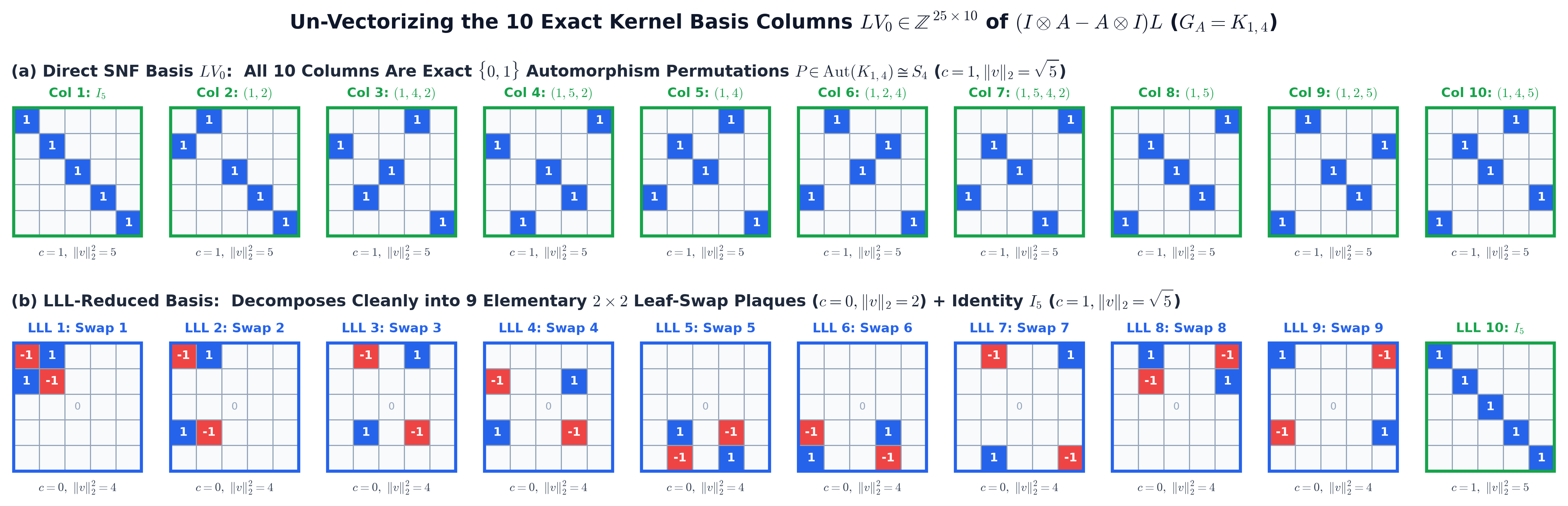}
\end{minipage}

Note that $10$ of the $11$ raw basis vectors in $L V_0$ resulting directly from the $\SNF$ are actual automorphism permutation matrices of $K_{1,4}$, which affine-linearly generate all $|\Aut(K_{1,4})| = 4! = 24$ automorphisms. Moreover, running the Lenstra--Lenstra--Lov\'asz (LLL) lattice reduction algorithm \cite{lenstra1982factoring} cleanly isolates $I_5$ and the nine elementary $2 \times 2$ leaf swaps. Though $|\Aut(G)|$ can obviously grow much faster than the lattice dimension $\le N^2$ for larger graphs, this example gives a concrete, visual picture of how the integer commutator lattice intersects the Birkhoff polytope.

It is worth pausing to reflect that Equation~\eqref{eq:rho_gen} offers a fully realized, computable generator for doubly-stochastic, integer solutions to the intertwiner (Equation~\eqref{eq:commutator}) for graphs $G_A$ and $G_B$. Moreover, though the lattice isomorphism problem (LIP) is known to be at least as hard as graph isomorphism ($\text{GI} \le_P \text{LIP}$) \cite{sikiric2009complexity,haviv2014lattice,ducas2022lattice}, the above method demonstrates an explicit, constructive mapping from $\text{GI}$ to integer lattices. In the algebraic graph theory literature, the Smith normal form has previously been applied to adjacency or Laplacian matrices directly \cite{rushanan1995eigenvalues,stanley2016smith}, to unprojected linear Diophantine systems \cite{berkholz2015limitations,berkholz2017diophantine}, and to the $N \times N$ walk matrix $W(A) = [\mathbf{1}, A\mathbf{1}, \dots, A^{N-1}\mathbf{1}]$ to study generalized cospectrality \cite{johnson1980note,wang2006sufficient,abiad2012cospectral,wang2019cospectral,abiad2022codeterminantal,qiu2023smith}. However, walk-matrix methods require the graph to be controllable ($\det W(A) \neq 0$) and degenerate on regular graphs where $A\mathbf{1} = k\mathbf{1}$ collapses $W(A)$ to rank~$1$. Projecting the $N^2 \times N^2$ Kronecker commutator directly onto the cycle basis $L$ bypasses walk vectors entirely---operating on arbitrary (including regular) graphs, capturing both $G$ and its complement $\bar{G}$ in a single operator, and extending naturally to rectangular subgraph isomorphism ($M < N$, Section~\ref{sec:snf_sgi}).

Though this framing provides a rich theoretical landscape for investigation, three useful theorems present themselves immediately:

\begin{theorem}[Graph Isomorphism is a Shortest Vector Problem]
\label{thm:shortest_norm}
Let $z \in \mathbb{Z}^d$ satisfy Equation~\eqref{eq:lattice_proj} (or Equation~\eqref{eq:lattice_proj_sym}) on the unit-sum affine slice $\mathbf{1}_d^T z = 1$, with ambient matrix $v = L z = \vecop(X) \in \mathbb{Z}^{N^2}$. Then:
\begin{gather*}
    \|L z\|_2^2 = N \quad (\text{with } \mathbf{1}_d^T z = 1) \iff X \in S_N \;\text{ and }\; G_A \simeq G_B, \\
    \text{whereas every non-permutation solution with } \mathbf{1}_d^T z = 1 \text{ satisfies } \|L z\|_2^2 \ge N + 2.
\end{gather*}
\end{theorem}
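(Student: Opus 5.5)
The plan is to work entirely in the ambient matrix space: write $v = Lz = \vecop(X)$ with $X \in \mathbb{Z}^{N\times N}$. By the lattice description in Equation~\eqref{eq:lattice_span}, the condition $\mathbf{1}_d^T z = 1$ means every row sum and every column sum of $X$ equals $c = 1$. Equation~\eqref{eq:lattice_proj} says that $\vecop(X)$ lies in the kernel of the Kronecker commutator, so by the vectorization identity (Equation~\eqref{eq:vec_identity}) this is exactly $XA - BX = 0$. For Equation~\eqref{eq:lattice_proj_sym} I would either accept the same conclusion, or first check that $L$ has full column rank and the operator is used in the form of Equation~\eqref{eq:lattice_proj}; I would state the theorem for \eqref{eq:lattice_proj} and treat \eqref{eq:lattice_proj_sym} as the intended shorthand. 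Everything then reduces to an elementary statement about integer matrices with unit line sums.

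\textbf{Step 1 (lower bound and equality case).} Take a single row $x \in \mathbb{Z}^N$ with $\sum_u x_u = 1$. For integers, $x_u^2 \ge |x_u|$, with equality iff $x_u \in \{-1,0,1\}$. Hence $\|x\|_2^2 \ge \sum_u |x_u| \ge \left|\sum_u x_u\right| = 1$. Equality forces every entry into $\{-1,0,1\}$ and all nonzero entries to share a sign. Since the sum is $1$, the row is then a standard basis vector $e_u$. Summing over the $N$ rows gives $\|Lz\|_2^2 \ge N$. Equality holds iff every row of $X$ is a standard basis vector; the column sums being $1$ then force $X$ to be a permutation matrix $P$.

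\textbf{Step 2 (isomorphism).} If equality holds, $X = P \in S_N$ satisfies $PA = BP$, so $PAP^T = B$ and $G_A \simeq G_B$. Conversely, if $G_A \simeq G_B$ via $P$, then $P$ satisfies Equation~\eqref{eq:commutator}. Since $P$ is an integer matrix with all line sums $1$, Equation~\eqref{eq:lattice_span} guarantees some $z \in \mathbb{Z}^d$ with $Lz = \vecop(P)$. Because every column of $L$ has total entry sum $N$, we get $N\,\mathbf{1}_d^T z = N$, so $\mathbf{1}_d^T z = 1$, and $\|Lz\|_2^2 = N$.

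\textbf{Step 3 (the gap of $2$).} Use parity: $x^2 \equiv x \pmod 2$ for every integer $x$. Therefore
\[
\|Lz\|_2^2 = \sum_{u,v} X_{uv}^2 \equiv \sum_{u,v} X_{uv} = N \cdot 1 = N \pmod 2.
\]
So any non-permutation solution, which by Step 1 has norm strictly greater than $N$, has squared norm at least $N+2$.

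\textbf{Main obstacle.} The only nontrivial input is the converse direction in Step 2, namely that $L(\mathbb{Z}^d)$ contains every integer matrix with equal line sums (in particular every permutation matrix). I would rely on Equation~\eqref{eq:lattice_span} as stated. If a justification is required, I would argue as follows: subtracting $c\,\vecop(I_N)$ reduces to the zero-sum case, and the anchored transpositions and $3$-cycles, minus the identity, generate the elementary moves $E_{ij} - E_{i0} - E_{0j} + E_{00}$ over $\mathbb{Z}$. These moves span the zero-line-sum integer lattice.
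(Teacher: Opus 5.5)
The gap is the case the statement explicitly includes in parentheses: a solution of Equation~\eqref{eq:lattice_proj_sym}, which you set aside. Steps 1 and 3 are correct. They are the paper's argument split in two: the paper gets both the bound and the gap at once from $\|X\|_F^2 = N + \sum_{i,j} X_{ij}(X_{ij}-1)$, whose summands are non-negative even integers vanishing exactly on $\{0,1\}$. Your sketch that the anchored cycles generate the zero-line-sum lattice is also sound.

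Under \eqref{eq:lattice_proj_sym} alone, once Step 1 gives $X = P \in S_N$, you only know $L^T\vecop(PA - BP) = 0$. That is, $R := PA - BP$ lies in $\ker L^T = \operatorname{Col}(L)^\perp$, not that $R = 0$. So Step 2's ``$PA = BP$, hence $G_A \simeq G_B$'' is not justified. Your suggested repair, checking that $L$ has full column rank, does not address this. Column rank controls $\ker L$, whereas the obstruction is $\ker L^T$, which has dimension $N^2 - \big((N-1)^2+1\big) = 2N-2 > 0$. So $L^T(I\otimes A - B\otimes I)Lz = 0$ does not in general imply $(I\otimes A - B\otimes I)Lz = 0$. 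You need a property specific to commutator residuals of permutation matrices.

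The missing lemma, which is how the paper finishes, is that the projection is lossless at permutations. The orthogonal complement of the equal-line-sum matrices consists of matrices $r\mathbf{1}^T + \mathbf{1}c^T$. So write $R = r\mathbf{1}^T + \mathbf{1}c^T$. Since $P^T\mathbf{1} = \mathbf{1}$, you get $A - P^TBP = P^TR = u\mathbf{1}^T + \mathbf{1}c^T$ with $u = P^Tr$.

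The left-hand side is symmetric with zero diagonal, because $A$ and $P^TBP$ are adjacency matrices of simple graphs. Symmetry forces $u - c \in \mathbb{R}\mathbf{1}$, and the zero diagonal forces $u = -c$. Hence $c$ and $u = -c$ are constant vectors, so $P^TR = 0$. Therefore $R = 0$, i.e.\ $PA = BP$, and your Step 2 then applies verbatim to \eqref{eq:lattice_proj_sym}.

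With this lemma added your proof is complete; without it, you have proved the theorem only for Equation~\eqref{eq:lattice_proj}.
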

\begin{proof}
For any integer matrix $X \in \mathbb{Z}^{N \times N}$ with row and column sums $c = \mathbf{1}_d^T z = 1$, we have $\sum_{i,j} X_{i,j} = N$. Since $x(x-1) \ge 0$ holds for every integer $x \in \mathbb{Z}$ with equality if and only if $x \in \{0,1\}$, and $x(x-1) \ge 2$ is even whenever $x \in \mathbb{Z} \setminus \{0,1\}$, summing over all entries yields:
\begin{equation*}
\|L z\|_2^2 \;=\; \sum_{i,j=1}^N X_{i,j}^2 \;=\; N + \sum_{i,j=1}^N X_{i,j}(X_{i,j} - 1) \;\ge\; N,
\end{equation*}
with equality $\|L z\|_2^2 = N \iff X \in \{0,1\}^{N \times N}$ (which with $X\mathbf{1}=\mathbf{1}$ and $X^T\mathbf{1}=\mathbf{1}$ means $X \in S_N$), and $\|L z\|_2^2 \ge N + 2$ whenever $X \notin S_N$. Finally, for $X = P \in S_N$, left-projection by $L^T$ is lossless because $P^T(PA - BP) = A - P^T B P$ is symmetric with zero diagonal, and the only symmetric matrix with zero diagonal in $(I_N \otimes P^T)\operatorname{Col}(L)^\perp = \{r\mathbf{1}^T + \mathbf{1}c^T : \mathbf{1}^T r = \mathbf{1}^T c\}$ is $0$; hence $L^T\vecop(PA - BP) = 0 \iff PA - BP = 0 \iff G_A \simeq G_B$.
\end{proof}

\begin{theorem}[Smith Normal Form Invariants and Generalized Cospectrality] 
\label{thm:cospectral_lattice}
Let $\tilde{N}_{AB} = \dim \Null\!\big((I \otimes A - B \otimes I) L\big)$. For graphs $G_A, G_B$ satisfying $N_{AA} = N_{AB} = N_{BB}$ in Theorem~\ref{thm:cospectral} \cite{johnson1980note,wang2006sufficient}:
\begin{equation}
\label{eq:cospectral_lattice}
\tilde{N}_{AA} = \tilde{N}_{AB} = \tilde{N}_{BB} \iff \Spec(G_A) = \Spec(G_B) \quad \text{and} \quad \Spec(\bar{G}_A) = \Spec(\bar{G}_B).
\end{equation}
\end{theorem}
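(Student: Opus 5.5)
We plan to compute $\tilde{N}_{AB}$ exactly in terms of spectral data. The formula will be symmetric in $A$ and $B$, so a sum-of-squares argument in the style of Theorem~\ref{thm:cospectral} then gives the equivalence.

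\textbf{Step 1 (reduction to a matrix space).} The columns of $L$ are linearly independent, since $L$ is a basis of the $d$-dimensional doubly-stochastic subspace. Hence
\[
\tilde{N}_{AB}=\dim\{X\in\mathcal{D}_N : XA=BX\},
\]
where $\mathcal{D}_N=\{X: X\mathbf{1}=c\mathbf{1},\ X^T\mathbf{1}=c\mathbf{1},\ c\in\mathbb{R}\}$. The real dimension of the kernel equals its rank over $\mathbb{Z}$, so working over $\mathbb{R}$ is harmless. We use the same index convention as in Equation~\eqref{eq:kronecker_gi}, with $A,B$ symmetric.

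\textbf{Step 2 (eigen-decomposition of the intertwiners).} Let $\mu_1,\dots,\mu_k$ be the common distinct eigenvalues. By the hypothesis $N_{AA}=N_{AB}=N_{BB}$ and Theorem~\ref{thm:cospectral}, the multiplicities agree: $m_r=n_r$. Let $V_r$ and $U_r$ be orthonormal bases of the $\mu_r$-eigenspaces of $A$ and $B$. Every solution of $XA=BX$ has the form $X=\sum_r U_rC_rV_r^T$ with $C_r\in\mathbb{R}^{m_r\times m_r}$ arbitrary.

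Put $a_r=V_r^T\mathbf{1}$ and $b_r=U_r^T\mathbf{1}$; these are the main-angle vectors. The doubly-stochastic constraints become
\[
C_ra_r=c\,b_r,\qquad C_r^Tb_r=c\,a_r\qquad\text{for all } r.
\]

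\textbf{Step 3 (dimension count).} First take the slice $c=0$. Write $\alpha_r=[a_r\neq0]$ and $\beta_r=[b_r\neq0]$. The conditions $C_ra_r=0$ and $C_r^Tb_r=0$ cut each block down to dimension $(m_r-\alpha_r)(m_r-\beta_r)$.

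The coordinate $c$ adds exactly one further dimension when some solution with $c=1$ exists. Since $b_r^TC_ra_r=\|b_r\|^2=\|a_r\|^2$, such a solution exists if and only if $\|a_r\|=\|b_r\|$ for every $r$. The rank-one choice $C_r=b_ra_r^T/\|a_r\|^2$ (or $0$ when $a_r=b_r=0$) then works. Let $\delta_{AB}\in\{0,1\}$ indicate this condition. This gives
\[
\tilde{N}_{AB}=\sum_r(m_r-\alpha_r)(m_r-\beta_r)+\delta_{AB},
\]
with $\tilde{N}_{AA}=\sum_r(m_r-\alpha_r)^2+1$, and similarly for $\tilde N_{BB}$. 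Therefore
\[
\tilde N_{AA}+\tilde N_{BB}-2\tilde N_{AB}=\sum_r(\alpha_r-\beta_r)^2+2(1-\delta_{AB})\ \ge 0 .
\]
If the three quantities are equal, this expression is $0$, which forces $\delta_{AB}=1$ (and hence $\alpha=\beta$). Conversely, $\delta_{AB}=1$ implies $\alpha=\beta$, and then all three dimensions coincide. So the left side of Equation~\eqref{eq:cospectral_lattice} holds iff $\|a_r\|=\|b_r\|$ for all $r$.

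\textbf{Step 4 (complements).} Using $\bar A=J-I-A$ and the matrix determinant lemma, the characteristic polynomial of $\bar A$ is determined by $\Spec(A)$ together with
\[
\mathbf{1}^T\big(xI-A\big)^{-1}\mathbf{1}=\sum_r\frac{\|a_r\|^2}{x-\mu_r}.
\]
Given $\Spec(G_A)=\Spec(G_B)$, we have $\Spec(\bar G_A)=\Spec(\bar G_B)$ iff these rational functions coincide. By partial fractions over the distinct poles $\mu_r$, that happens iff $\|a_r\|=\|b_r\|$ for all $r$. This closes the equivalence.

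The main obstacle is Step 3. The delicate point is the single extra dimension from the free sum $c$: one must show it appears for the pair $(A,B)$ exactly under equal main-angle norms, and does not merely follow from $\alpha_r=\beta_r$. A secondary point is checking that the rank-one $C_r$ construction is consistent across all blocks simultaneously.
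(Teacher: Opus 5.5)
Your proof is correct and takes essentially the paper's route. You decompose the intertwiners along the common eigenspaces, count the $c=0$ slice as $\sum_r (m_r-\alpha_r)(m_r-\beta_r)$, show the free sum $c$ adds one dimension exactly when $\|a_r\|=\|b_r\|$ for all $r$, and identify that condition with cospectral complements. Your version is tighter than the paper's in three places: the identity $\tilde N_{AA}+\tilde N_{BB}-2\tilde N_{AB}=\sum_r(\alpha_r-\beta_r)^2+2(1-\delta_{AB})$ makes the forward direction explicit, the rank-one choice $C_r=b_ra_r^T/\|a_r\|^2$ supplies sufficiency that the paper only asserts, and the determinant-lemma/partial-fraction step replaces the citation to Johnson--Newman.
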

\begin{proof}
Let $P_r(A)$ and $P_r(B)$ denote the orthogonal projectors onto the $\mu_r$-eigenspaces $E_r(A)$ and $E_r(B)$, and set $v_r(A) = P_r(A)\mathbf{1}$ and $v_r(B) = P_r(B)\mathbf{1}$. Any $\vecop(X) = L z$ in $\Null((I \otimes A - B \otimes I)L)$ satisfies $X A = B X$ (so $X P_r(A) = P_r(B) X$) together with $X\mathbf{1} = c\mathbf{1}$ and $X^T\mathbf{1} = c\mathbf{1}$, which implies $X v_r(A) = c v_r(B)$ and $X^T v_r(B) = c v_r(A)$ for every $r$. On the $c = 0$ slice, $X$ maps $E_r(A) \cap v_r(A)^\perp$ into $E_r(B) \cap v_r(B)^\perp$, contributing $\tilde{N}_{AB}^{(0)} = \sum_r (m_r - \delta_{v_r(A) \neq 0})(n_r - \delta_{v_r(B) \neq 0})$ dimensions. Furthermore, taking the inner product with $v_r(B)$ yields
\[
c\,\|v_r(B)\|_2^2 \;=\; v_r(B)^T X v_r(A) \;=\; (X^T v_r(B))^T v_r(A) \;=\; c\,\|v_r(A)\|_2^2 \qquad \forall\, r.
\]
Thus a $c \neq 0$ solution exists (adding $+1$ to $\tilde{N}_{AB}^{(0)}$ as in $\tilde{N}_{AA}$ and $\tilde{N}_{BB}$) if and only if $\|v_r(A)\|_2 = \|v_r(B)\|_2$ for all $r$. Together with $m_r = n_r$ from $N_{AA} = N_{AB} = N_{BB}$, equality of the main projections $\|v_r(A)\|_2 = \|v_r(B)\|_2$ across all eigenspaces is equivalent by the Johnson--Newman theorem \cite{johnson1980note} to $\Spec(G_A) = \Spec(G_B)$ and $\Spec(\bar{G}_A) = \Spec(\bar{G}_B)$.
\end{proof}

\begin{theorem}[Rational Denominators and Invariant Factors]
\label{thm:level_bound}
Let $M_0(A,B) = (I_N \otimes A - B \otimes I_N) L_{N,N} \in \mathbb{Z}^{N^2 \times (N-1)^2}$ denote the commutator restricted to the $(N-1)^2$-dimensional zero-sum difference lattice $\mathcal{L}^{(0)}_{N,N}$ (generated by $L_{N,N}$ in Section~\ref{sec:snf_sgi}), with non-zero Smith normal form invariant factors $s_1 \mid s_2 \mid \dots \mid s_{\max}$ \cite{rushanan1995eigenvalues,stanley2016smith}. Any regular rational orthogonal matrix $Q \in O(N, \mathbb{Q})$ ($Q A - B Q = 0$, $Q \mathbf{1} = \mathbf{1}$) of level $\ell = \min\{m \in \mathbb{Z}_{\ge 1} : m Q \in \mathbb{Z}^{N \times N}\}$ \cite{wang2006sufficient} decomposes around a reference permutation $P_0 \in S_N$ as $Q = P_0 + X_0$, where $\vecop(X_0) = L_{N,N} z_0$ ($z_0 \in \mathbb{Q}^{(N-1)^2}$) satisfies the inhomogeneous integer system $M_0(A,B) z_0 = -\vecop(P_0 A - B P_0)$. Whenever $M_0(A,B)$ has full column rank (or on the normal component $z_0 \perp \Null(M_0(A,B))$), the level $\ell$ satisfies:
\begin{equation}
    \ell \;\Big|\; s_{\max}.
\end{equation}
Consequently:
\begin{enumerate}
    \item \textbf{Trivial Torsion Certifies Isomorphism:} If $s_{\max} = 1$, then $\ell = 1$, forcing $Q \in O(N, \mathbb{Z})$ with $Q \mathbf{1} = \mathbf{1} \implies G_A \simeq G_B$.
    \item \textbf{Immunity to Fractional Switching:} If $\gcd(s_{\max}, p) = 1$, the pair cannot be related by any such rational switching operation of level $p$. In particular, if $s_{\max}$ is odd, no full-rank cospectral mate can be generated by standard Godsil--McKay switching ($\ell = 2$) \cite{godsil1982constructing}.
\end{enumerate}
\end{theorem}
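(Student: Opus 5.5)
We set up the decomposition first. Choose any reference permutation $P_0 \in S_N$ and put $X_0 = Q - P_0$. Since $Q\mathbf{1} = \mathbf{1}$ and $Q$ is orthogonal, $Q^T\mathbf{1} = Q^{-1}\mathbf{1} = \mathbf{1}$. Hence $X_0$ has all row and column sums equal to $0$, and it has rational entries with denominators dividing $\ell$.

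We then use the fact that $L_{N,N}$ generates the zero-sum difference lattice $\mathcal{L}^{(0)}_{N,N}$. By definition this is the full integer lattice of zero-row/column-sum matrices, and it spans the corresponding rational subspace. Therefore $\vecop(X_0) = L_{N,N} z_0$ for a unique $z_0 \in \mathbb{Q}^{(N-1)^2}$. Because $L_{N,N}$ is a lattice basis, $\ell z_0 \in \mathbb{Z}^{(N-1)^2}$. Substituting into $QA - BQ = 0$ and vectorizing with Equation~\eqref{eq:vec_identity} gives the inhomogeneous system $M_0(A,B)\, z_0 = -\vecop(P_0 A - B P_0) =: b$, with $b \in \mathbb{Z}^{N^2}$.

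The divisibility step goes through the Smith normal form. Write $U M_0 V = S = \diag(s_1,\dots,s_r,0,\dots)$ with $U, V$ unimodular, and set $y = V^{-1} z_0$ and $c = U b \in \mathbb{Z}^{N^2}$. The system becomes $s_k y_k = c_k$ for $k \le r$, with the remaining coordinates of $y$ free. Under full column rank there are no free coordinates, so $y_k = c_k / s_k$ and $s_{\max}\, y \in \mathbb{Z}^{(N-1)^2}$. In the non-full-rank case, we work on the normal component by setting the free coordinates of $y$ to zero. Since $V$ is unimodular, $s_{\max} z_0 = V(s_{\max} y)$ is integral, hence $s_{\max} X_0$ is integral. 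Because $P_0$ is integral, $s_{\max} Q$ is integral as well. The level $\ell$ is the minimal positive $m$ with $mQ$ integral. The set of such $m$ is an ideal of $\mathbb{Z}$ intersected with the positive integers, so it consists of the multiples of $\ell$, and therefore $\ell \mid s_{\max}$.

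The corollaries follow directly. If $s_{\max} = 1$, then $\ell = 1$, so $Q$ is an integral orthogonal matrix, i.e.\ a signed permutation matrix. The condition $Q\mathbf{1} = \mathbf{1}$ then forces every sign to be $+$, so $Q \in S_N$ and $QA = BQ$ gives $G_A \simeq G_B$. If $\gcd(s_{\max}, p) = 1$ and $\ell = p > 1$, then $p \mid s_{\max}$ gives a contradiction. Taking $p = 2$ covers Godsil--McKay switching.

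The main obstacle is the normal-component caveat. When $M_0$ has a kernel, $z_0$ is determined only modulo $\Null(M_0)$, and the actual $z_0$ coming from $Q$ need not be the particular solution with vanishing free SNF coordinates. Adding a rational kernel vector can introduce denominators that are not controlled by $s_{\max}$. I would therefore state the divisibility only for the canonical particular solution $V(c_1/s_1,\dots,c_r/s_r,0,\dots)$. This corresponds to the theorem's phrase ``on the normal component,'' interpreted as the SNF complement rather than the Euclidean orthogonal complement. For the Euclidean complement, I would check separately that the projection preserves the denominator bound, which does not hold in general. This gap is where the statement needs the most care.
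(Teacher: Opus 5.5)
Your proposal is correct and follows the paper's own argument. You decompose $Q = P_0 + X_0$, write $\vecop(X_0) = L_{N,N} z_0$, solve $M_0(A,B)\, z_0 = b$ through the Smith normal form, and use unimodularity of $V$ to get $s_{\max} z_0$ integral and hence $\ell \mid s_{\max}$. You also fill in steps the paper leaves implicit, such as deriving $Q^T\mathbf{1} = \mathbf{1}$ from orthogonality and the signed-permutation argument for the first corollary.

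Your caveat is well founded. The paper's proof tacitly treats all $(N-1)^2$ diagonal Smith-form entries as nonzero, so its argument covers only the full-column-rank case. The Euclidean normal component can pick up denominators not dividing $s_{\max}$: already $z_1 + z_2 = 1$, where $s_{\max} = 1$, has minimum-norm solution $(\tfrac12, \tfrac12)$.
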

\begin{proof}
Because $Q\mathbf{1} = \mathbf{1} = P_0\mathbf{1}$ and $Q^T\mathbf{1} = \mathbf{1} = P_0^T\mathbf{1}$, the difference $X_0 = Q - P_0$ has zero row and column sums, so $\vecop(X_0) = L_{N,N} z_0$ for a unique coordinate vector $z_0 \in \mathbb{Q}^{(N-1)^2}$ whose least common denominator equals $\ell$ (since $L_{N,N}$ is unimodular on $\mathcal{L}^{(0)}_{N,N}$). Substituting $Q = P_0 + X_0$ into $QA - BQ = 0$ yields $M_0(A,B) z_0 = b$, where $b = -\vecop(P_0 A - B P_0) \in \mathbb{Z}^{N^2}$. Writing the rectangular Smith normal form $U M_0(A,B) V = S$ with unimodular $U \in \mathrm{GL}(N^2, \mathbb{Z})$ and $V \in \mathrm{GL}((N-1)^2, \mathbb{Z})$, the top $(N-1)^2$ rows give $\diag(s_1, \dots, s_{\max}) (V^{-1} z_0) = (U b)_{1:(N-1)^2} \in \mathbb{Z}^{(N-1)^2}$. Since every $s_i$ divides $s_{\max}$ and $V$ is unimodular over $\mathbb{Z}$, it follows that $s_{\max} z_0 \in \mathbb{Z}^{(N-1)^2}$, and therefore $\ell \mid s_{\max}$.
\end{proof}

\subsection{The Smith Normal Form and Subgraph Isomorphism}
\label{sec:snf_sgi}

A natural question  is how the Smith normal form might relate to the more general problem of subgraph isomorphism. Surprisingly, the formulation maps naturally so long as the problem is framed in the context of the symmetric lattice commutator
\begin{equation*}
K(A,B) = L^T (I \otimes A - B \otimes I) L
\end{equation*}
 When $M < N$, moving from square permutations $P \in \{0,1\}^{N \times N}$ to rectangular injections $P \in \{0,1\}^{M \times N}$ introduces two obstacles that are absent when $M = N$:
\begin{enumerate}[leftmargin=1.8em]
    \item \textbf{Asymmetric Column Sums Under Vertex Relabeling:} When $M = N$, every permutation matrix has column sums $(1, 1, \dots, 1)$, which is invariant under any relabeling of $G_A$'s vertices. When $M < N$, however, a rectangular injection $P \in \{0,1\}^{M \times N}$ matches $M$ columns of $G_A$ (column sum $1$) and leaves $N - M$ columns unmatched (column sum $0$). Including a reference injection such as $P_0 = \begin{bmatrix} I_M & \mathbf{0}_{M \times (N-M)} \end{bmatrix}$ in the lattice basis would uniquely identify the first $M$ vertices of $G_A$, meaning that any relabeling of the vertices of $G_A$ would permute the $1$s and $0$s across columns and alter the resulting Smith normal form.
    \item \textbf{Boundary Edges Leaving the Subgraph:} Even when $G_B$ is an induced subgraph of $G_A$, edges in $G_A$ connecting the $M$ matched vertices to the $N - M$ unmatched vertices create non-zero $+1$ entries in the unmatched columns of $P A - B P$.
\end{enumerate}
Both obstacles disappear if we consider \textbf{differences between candidate injections} rather than individual injection matrices.

If $P_1, P_2 \in \{0,1\}^{M \times N}$ are any two rectangular injections mapping $V_B$ onto the \emph{same} $M$-vertex subset of $G_A$, their difference matrix $X = P_1 - P_2$ exhibits two exact cancellations:
\begin{enumerate}[label=(\roman*),leftmargin=1.8em]
    \item \textbf{Row sums vanish:} Because each row of $P_1$ and $P_2$ sums to $1$ (every vertex of $G_B$ is mapped exactly once), every row of $X$ sums to $1 - 1 = 0$, giving $X \mathbf{1}_N = \mathbf{0}_M$.
    \item \textbf{Column sums vanish universally:} On the $M$ matched columns, both matrices have column sum $1$, so $1 - 1 = 0$. On the $N - M$ unmatched columns, neither matrix has any entries, so $0 - 0 = 0$. Thus \emph{every} column sum of $X$ is identically zero, giving $X^T \mathbf{1}_M = \mathbf{0}_N$.
\end{enumerate}
This motivates restricting our algebraic analysis to the \textbf{zero-sum difference lattice}:
\begin{equation}
\label{eq:zero_sum_lattice}
\mathcal{L}^{(0)}_{M,N} = \Big\{\, X \in \mathbb{Z}^{M \times N} \;:\; X \mathbf{1}_N = \mathbf{0}_M \;\text{ and }\; X^T \mathbf{1}_M = \mathbf{0}_N \,\Big\}.
\end{equation}
Working in $\mathcal{L}^{(0)}_{M,N}$ neatly eliminates both obstacles:
\begin{itemize}[leftmargin=1.5em]
    \item \textbf{Subset and Relabeling Independence:} Because every column sum in $\mathcal{L}^{(0)}_{M,N}$ is $0$ regardless of which columns are matched or unmatched, $\mathcal{L}^{(0)}_{M,N}$ does not depend on which $M$ vertices of $G_A$ are chosen. Permuting the vertices of $G_B$ or $G_A$ ($X \mapsto P_B X P_A^T$) simply permutes the rows and columns of $X$, mapping $\mathcal{L}^{(0)}_{M,N}$ bijectively onto itself via a unimodular transformation in $\mathrm{GL}(d_{M,N}, \mathbb{Z})$.
    \item \textbf{Boundary Edge Annihilation:} Because the row and column sums of $X$ vanish identically, projecting the commutator operator onto $\mathcal{L}^{(0)}_{M,N}$ cancels the contributions of boundary edges exiting the subgraph, isolating the internal commutator structure of the induced copy.
\end{itemize}

Geometrically, $\mathcal{L}^{(0)}_{M,N}$ is the $(M-1)(N-1)$-dimensional lattice of integer swap moves (the tangent space of zero-sum perturbations to doubly-stochastic assignments), generated unimodularly by the $(M-1)(N-1)$ elementary pairwise swap differences:
\begin{equation}
\label{eq:cycle_basis}
C_{i,j} \;=\; E_{1,1} - E_{1,j} - E_{i,1} + E_{i,j} \;\in\; \{-1,0,1\}^{M \times N}, \qquad 2 \le i \le M,\;\; 2 \le j \le N.
\end{equation}
Writing $L_{M,N} \in \{-1,0,1\}^{MN \times (M-1)(N-1)}$ for the matrix whose columns are $\vecop(C_{i,j})$, define the projected commutator matrices:
\begin{equation}
\label{eq:projected_kab_kbb}
K(A,B) \;=\; L_{M,N}^T (I_M \otimes A - B \otimes I_N) L_{M,N}, \qquad
K(B,B) \;=\; L_{M,M}^T (I_M \otimes B - B \otimes I_M) L_{M,M}
\end{equation}
of dimensions $d_{M,N} = (M-1)(N-1)$ and $d_{M,M} = (M-1)^2$, separated by the dimension gap:
\begin{equation}
\label{eq:delta_gap}
\Delta = d_{M,N} - d_{M,M} \;=\; (M-1)(N-1) - (M-1)^2 \;=\; (N - M)(M - 1).
\end{equation}
\begin{theorem}[SNF Interlacing for Induced Subgraphs]
\label{thm:snf_interlacing}
Let $G_A$ ($N$ vertices) and $G_B$ ($M \le N$ vertices) be graphs with projected commutator matrices $K(A,B)$ and $K(B,B)$ as in Equation~\eqref{eq:projected_kab_kbb}. Let
\[
s_{BB} = (s_{BB,1}, \dots, s_{BB,d_{M,M}}) \quad\text{and}\quad s_{AB} = (s_{AB,1}, \dots, s_{AB,d_{M,N}})
\]
denote the diagonal invariant factors of $\SNF(K(B,B))$ and $\SNF(K(A,B))$ ordered with non-zero divisors first ($s_1 \mid s_2 \mid \dots \mid s_r$) followed by zeros. 

If $G_B \subseteq_{\mathrm{ind}} G_A$, then for every $1 \le i \le (M-1)^2$, the Smith normal form invariants satisfy the divisibility sandwich:
\begin{equation}
\label{eq:snf_sandwich}
s_{AB, \, i} \;\Big|\; s_{BB, \, i} \;\Big|\; s_{AB, \, i + 2(N-M)(M-1)}
\end{equation}
and the nullspace dimensions satisfy the lower bound:
\begin{equation}
\label{eq:nullity_sgi_bound}
\dim\Null(K(A,B)) \;\ge\; \dim\Null(K(B,B)) - (N-M)(M-1).
\end{equation}
\end{theorem}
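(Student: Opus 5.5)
The plan is to show that, after a suitable relabeling of $G_A$, the matrix $K(B,B)$ appears as a principal submatrix of $K(A,B)$. This submatrix is obtained by deleting exactly $\Delta=(N-M)(M-1)$ rows and $\Delta$ columns. We then invoke the classical interlacing theorem for invariant factors of submatrices (R.~C.~Thompson's theorem, not stated in the paper): if $H$ is obtained from an integer matrix $G$ by deleting $r$ rows and $c$ columns, then $s_i(G)\mid s_i(H)\mid s_{i+r+c}(G)$, with the convention that $0$ is divisible by everything and invariant factors beyond the rank are $0$. With $r=c=\Delta$ this gives exactly the sandwich \eqref{eq:snf_sandwich} with shift $2(N-M)(M-1)$.

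\textbf{Step 1 (relabeling invariance).} Suppose $G_B\subseteq_{\mathrm{ind}}G_A$ via $\pi$. We permute the vertices of $G_A$ by $P_A$ so that $\pi(i)=i$ for $i\le M$. The induced condition then reads
\[
A=\begin{bmatrix} B & C\\ C^T & D\end{bmatrix}.
\]
As noted after \eqref{eq:zero_sum_lattice}, the map $X\mapsto XP_A^T$ is a lattice automorphism of $\mathcal{L}^{(0)}_{M,N}$. In coordinates it is $L_{M,N}\mapsto L_{M,N}W$ for some $W\in\mathrm{GL}(d_{M,N},\mathbb{Z})$. Consequently $K(A,B)$ changes by the unimodular congruence $W^TK(A,B)W$, which leaves its Smith normal form unchanged. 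I would verify that the commutator transforms compatibly, namely $(XP_A^T)(P_AAP_A^T)-B(XP_A^T)=(XA-BX)P_A^T$, and that the Frobenius pairing is permutation invariant.

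\textbf{Step 2 (block structure).} We use the basis $C_{i,j}$ of \eqref{eq:cycle_basis} and order it so that the $(M-1)^2$ elements with $j\le M$ come first. Because the anchor column $1$ is matched, these first elements are precisely the zero-padded copies $[C'_{i,j}\;\,0]$ of the basis of $\mathcal{L}^{(0)}_{M,M}$. Reading $K(A,B)$ as the Gram-type pairing
\[
K_{kl}=\langle Y_k,\;X_lA-BX_l\rangle_F,
\]
we take $X=[X'\;0]$ and $Y=[Y'\;0]$ and compute
\[
XA-BX=[\,X'B-BX'\;\;\,X'C\,].
\]
The boundary block $X'C$ is annihilated by the zero columns of $Y$, so the pairing reduces to $\langle Y',X'B-BX'\rangle_F$. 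Hence the leading $(M-1)^2\times(M-1)^2$ principal block of $K(A,B)$ is exactly $K(B,B)$. The main bookkeeping obstacle is checking that the paper's row-stacking $\vecop$ and the identity \eqref{eq:vec_identity} make $L^T(I\otimes A-B\otimes I)L$ equal to this Frobenius pairing, rather than to a transposed variant. If a transpose does appear, symmetry of $A$ and $B$ should absorb it, since the same computation works for $\langle Y, AX^T\ldots\rangle$-type forms.

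\textbf{Step 3 (conclusions).} Applying Thompson's interlacing to the submatrix obtained by deleting the last $\Delta$ rows and $\Delta$ columns yields \eqref{eq:snf_sandwich} for all $1\le i\le (M-1)^2$. For the nullity bound, deleting $\Delta$ rows and $\Delta$ columns lowers rank by at most $2\Delta$, so $\rank K(A,B)\le \rank K(B,B)+2\Delta$. Therefore
\[
\dim\Null K(A,B)=d_{M,N}-\rank K(A,B)\ge (d_{M,M}+\Delta)-\rank K(B,B)-2\Delta=\dim\Null K(B,B)-\Delta,
\]
which is \eqref{eq:nullity_sgi_bound}.
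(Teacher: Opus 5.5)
Your proposal is correct and follows the paper's proof essentially step for step. You relabel $G_A$ so the induced copy occupies the first $M$ vertices, which is a unimodular congruence of $K(A,B)$. You then note that the $(M-1)^2$ padded swap matrices $C_{i,j}$ with $j\le M$ see only the block $X'B-BX'$, because the zero columns of $Y$ annihilate the boundary block $X'C$, so $K(B,B)$ is a principal submatrix. Finally you apply Thompson's submatrix interlacing with $\Delta$ rows and $\Delta$ columns deleted, together with the rank bound for the nullity. Your Step~2 worry about a transposed variant does not arise: with row-stacking $\vecop$ and $A=A^T$ one has $(I_M\otimes A-B\otimes I_N)\vecop(X)=\vecop(XA-BX)$, so $K$ is exactly the Frobenius pairing $\langle Y, XA-BX\rangle_F$ you use.
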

\begin{proof}
Suppose $G_B \subseteq_{\mathrm{ind}} G_A$. By relabeling the vertices of $G_A$ via a permutation matrix $P_A \in S_N$ (which acts on $L_{M,N}$ via a unimodular basis matrix $U \in \mathrm{GL}(d_{M,N}, \mathbb{Z})$ and leaves $\SNF(K(A,B))$ invariant), we may place the induced copy of $G_B$ on the first $M$ vertices of $G_A$:
\[
P_A A P_A^T \;=\; \begin{bmatrix} B & C \\ C^T & D \end{bmatrix},
\]
where $C \in \{0,1\}^{M \times (N-M)}$ encodes the boundary edges between $G_B$ and $V_A \setminus V_B$. Partition the basis $L_{M,N} = \begin{bmatrix} \tilde{L}_{M,M} & L_{\mathrm{ext}} \end{bmatrix}$ into the $d_{M,M} = (M-1)^2$ cycle matrices $C_{i,j}$ ($2 \le i,j \le M$) supported on the first $M$ columns---which are precisely the basis matrices of $L_{M,M}$ padded with $N-M$ zero columns, $\tilde{X} = \begin{bmatrix} X & 0 \end{bmatrix}$---and the remaining $\Delta = (N-M)(M-1)$ cycle matrices ($M < j \le N$). For any two padded matrices $\tilde{X} = \begin{bmatrix} X & 0 \end{bmatrix}$ and $\tilde{Y} = \begin{bmatrix} Y & 0 \end{bmatrix}$ in $\tilde{L}_{M,M}$, the commutator evaluates to:
\[
\tilde{X} \begin{bmatrix} B & C \\ C^T & D \end{bmatrix} - B \tilde{X}
\;=\; \begin{bmatrix} X B - B X & X C \end{bmatrix}.
\]
When we project against $\tilde{Y}$ on the left, the $N - M$ zero columns of $\tilde{Y} = \begin{bmatrix} Y & 0 \end{bmatrix}$ multiply the boundary block $X C$ by zero, annihilating all boundary edges:
\[
\vecop(\tilde{Y})^T (I_M \otimes P_A A P_A^T - B \otimes I_N) \vecop(\tilde{X})
\;=\; \operatorname{tr}\!\left(\begin{bmatrix} Y^T \\ 0 \end{bmatrix} \begin{bmatrix} X B - B X & X C \end{bmatrix}\right)
\;=\; \operatorname{tr}\!\big(Y^T (X B - B X)\big).
\]
Thus the top-left $d_{M,M} \times d_{M,M}$ principal submatrix of $U^T K(A,B) U$ is identically $K(B,B)$. Because $K(B,B)$ is obtained from $U^T K(A,B) U$ by deleting $\Delta = (N-M)(M-1)$ rows and $\Delta = (N-M)(M-1)$ columns, the classical Smith normal form submatrix interlacing theorem \cite{thompson1979smith} guarantees $s_{AB,i} \mid s_{BB,i} \mid s_{AB,i + 2\Delta}$ for all $1 \le i \le d_{M,M}$, and Cauchy rank interlacing yields Equation~\eqref{eq:nullity_sgi_bound}.
\end{proof}

\begin{theorem}[SNF Interlacing for Non-Induced Subgraphs]
\label{thm:snf_noninduced}
Let $G_A$ ($N$ vertices) and $G_B$ ($M \le N$ vertices) be graphs where every vertex of $G_B$ has degree at least $d$. If no two vertices of degree $> d$ in $G_A$ share an edge, then $G_B \subseteq G_A$ satisfies Equations~\eqref{eq:snf_sandwich} and Equation~\eqref{eq:nullity_sgi_bound} of Theorem~\ref{thm:snf_interlacing}.
\end{theorem}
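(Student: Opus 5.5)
The plan is to reduce the non-induced statement to Theorem~\ref{thm:snf_interlacing}. Under the stated degree hypothesis I will show that \emph{every} non-induced embedding of $G_B$ into $G_A$ is automatically induced. Once that holds, the divisibility sandwich~\eqref{eq:snf_sandwich} and the nullity bound~\eqref{eq:nullity_sgi_bound} follow verbatim from the induced theorem. No new Smith normal form machinery is needed.

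The argument proceeds in three steps.

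\textbf{Setup.} Fix an injection $\pi: V_B \to V_A$ realizing $G_B \subseteq G_A$. By Theorem~\ref{thm:rect_commutator} this means $A_{\pi(i),\pi(j)} \ge B_{i,j}$ for all $i,j$. Injectivity of $\pi$ is what makes the degree counts below exact.

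\textbf{Degree count on image vertices.} Take any pattern vertex $i$. Its $\deg_B(i)$ neighbours map to distinct $A$-neighbours of $\pi(i)$, so $\deg_A(\pi(i)) \ge \deg_B(i) \ge d$.

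\textbf{Excluding extra edges.} Suppose the embedding is not induced. Then some pair $i \ne j$ has $B_{i,j}=0$ but $A_{\pi(i),\pi(j)}=1$. The vertex $\pi(j)$ is an $A$-neighbour of $\pi(i)$ and is distinct from the images of the $B$-neighbours of $i$, since $j$ is not a $B$-neighbour of $i$ and $\pi$ is injective. Hence
\[
\deg_A(\pi(i)) \ge \deg_B(i)+1 \ge d+1 > d,
\]
and symmetrically $\deg_A(\pi(j)) > d$. So $\pi(i)$ and $\pi(j)$ are two adjacent vertices of $G_A$, both of degree $> d$. This contradicts the hypothesis, so no such pair exists.

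\textbf{Conclusion.} Combined with $A_{\pi(i),\pi(j)} \ge B_{i,j}$, this gives $A_{\pi(i),\pi(j)} = B_{i,j}$ on all matched pairs. Equivalently, $P\bar A - \bar B P \ge 0$ also holds, so $G_B \subseteq_{\mathrm{ind}} G_A$ by Theorem~\ref{thm:rect_commutator}. Theorem~\ref{thm:snf_interlacing} then applies directly.

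The only delicate point is the degree count: $\pi(j)$ must be a genuinely new neighbour of $\pi(i)$, so that the inequality is strict. That is exactly where injectivity and $B_{i,j}=0$ are used. The remaining steps are immediate once the embedding is shown to be induced.
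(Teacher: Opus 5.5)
Your proof is correct and is essentially the paper's argument: both count degrees, noting that an extra edge $A_{\pi(i),\pi(j)}=1$ with $B_{i,j}=0$ would join two adjacent vertices of $G_A$ of degree $>d$. Hence every embedding is induced, and Theorem~\ref{thm:snf_interlacing} applies. Your version only makes explicit the role of injectivity in guaranteeing that $\pi(j)$ is a genuinely new neighbour of $\pi(i)$, which the paper leaves implicit.
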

\begin{proof}
Suppose $G_B \subseteq G_A$. Each matched vertex in $G_A$ already uses at least $d$ edges to match the edges of $G_B$. Any extra edge in $G_A$ between two matched vertices would increase both of their degrees in $G_A$ to at least $d + 1 > d$. Since no two vertices of degree $> d$ in $G_A$ share an edge, no such extra edge can exist, so $G_B \subseteq_{\mathrm{ind}} G_A$ and Theorem~\ref{thm:snf_interlacing} applies.
\end{proof}

\paragraph{Example 1: Proving Planarity from the SNF}

Since Theorem~\ref{thm:snf_interlacing} is a necessary condition for subgraph embedding, refuting a forbidden pattern acts as a polynomial-time \emph{non-containment certificate}. By Kuratowski's theorem \cite{kuratowski1930probleme}, a graph is planar if and only if it contains no subdivision of $K_5$ or $K_{3,3}$, which is to say that rejecting $K_5$ and $K_{3,3}$ (along with candidate subdivisions on $\le N$ vertices) via $\SNF$ interlacing immediately guarantees that the graph in question is planar. 

By way of example, let the target graph be the $7$-vertex planar Pentagonal Bipyramid $G_A = J_{13}$ ($N = 7, |E_A| = 15$, Figure~\ref{fig:snf_planar_polyhedron}a), formed by joining two polar hubs $\{1,2\}$ to an equatorial $5$-cycle $\{3,4,5,6,7\}$. Though a planar embedding exists, it is a classical visual example of a ``difficult'' toy example for planarity because the planar drawing is not visually obvious (Figure~\ref{fig:snf_planar_polyhedron}):
\begin{itemize}[leftmargin=1.8em]
    \item \textbf{Case 1 (Kuratowski Clique $G_{B'} = K_5 \not\subseteq J_{13}$, Refuted):}
    Here $M = 5$, $d_{5,5} = 16$, $d_{5,7} = 24$, and $2\Delta = 16$. Because $K_5$ is a complete graph (so induced and non-induced containment coincide) and commutes with every $5 \times 5$ permutation matrix, $K(B',B') = 0$:
    \[
    \begin{aligned}
    s_{B'B'} &= [\,\underbrace{\mathbf{0,\; 0,\; 0,\; 0,\; 0,\; 0,\; 0,\; 0}}_{i=1..8},\; \underbrace{0,\; \dots,\; 0}_{i=9..16}\,] \quad (\text{length } 16) \\
    s_{AB'}  &= [\,\underbrace{1,\; \dots,\; 1}_{1..16},\; \underbrace{\mathbf{1,\; 1,\; 5,\; 5,\; 45,\; 45,\; 45,\; 45}}_{i+16=17..24}\,] \quad (\text{length } 24)
    \end{aligned}
    \]
    At $i = 1, \dots, 8$, we have $s_{B'B',i} = 0$, whereas at $i + 16 = 17, \dots, 24$ we have $s_{AB',i+16} \in \{1, 5, 45\}$. Because $\mathbf{0 \nmid 1}$, $\mathbf{0 \nmid 5}$, and $\mathbf{0 \nmid 45}$, Theorem~\ref{thm:snf_interlacing} immediately refutes $K_5 \subseteq J_{13}$ (and the nullity bound~\eqref{eq:nullity_sgi_bound} also fails: $0 < 16 - 8 = 8$). In fact, because $s_{K_5,K_5} = [0^{16}]$ universally, \emph{any} $N$-vertex graph containing $K_5$ must satisfy $\dim\Null(K(A,K_5)) \ge 16 - 4(N-5)$.

    \item \textbf{Case 2 (Utility graph $G_{B''} = K_{3,3} \not\subseteq_{\mathrm{ind}} J_{13}$, Refuted):}
    Here $M = 6$, $d_{6,6} = 25$, $d_{6,7} = 30$, $\Delta = (7-6)(6-1) = 5$, and the shift is $2\Delta = 10$:
    \[
    \begin{aligned}
    s_{B''B''} &= [\,\underbrace{\mathbf{2,\; 2,\; 2,\; 2,\; 2,\; 2}}_{i=1..6},\; 6,\; 6,\; \underbrace{\mathbf{0,\; \dots,\; 0}}_{i=9..16},\; \underbrace{0,\; \dots,\; 0}_{17..25}\,] \quad (\text{length } 25) \\
    s_{AB''}   &= [\,\underbrace{1,\; \dots,\; 1}_{1..10},\; \underbrace{\mathbf{1,\; 1,\; 1,\; 1,\; 1,\; 1}}_{i+10=11..16},\; 2,\; 2,\; \underbrace{\mathbf{2,\; 2,\; 2,\; 6,\; 24,\; 24,\; 120,\; 600}}_{i+10=19..26},\; \underbrace{0,\; \dots,\; 0}_{27..30}\,] \quad (\text{length } 30)
    \end{aligned}
    \]
    At $i = 1, \dots, 6$, we have $s_{B''B'',i} = 2$, whereas at $i + 10 = 11, \dots, 16$ we have $s_{AB'',i+10} = 1$. Because $\mathbf{2 \nmid 1}$ (and at $i = 9, \dots, 16$, $\mathbf{0 \nmid s_{AB'',19..26}}$), Theorem~\ref{thm:snf_interlacing} immediately refutes $K_{3,3} \subseteq_{\mathrm{ind}} J_{13}$ (and the nullity bound~\eqref{eq:nullity_sgi_bound} also fails: $4 < 17 - 5 = 12$).

    Because we can guarantee that neither graph is present we have an immediate guarantee that the graph is planar without further investigation.
\end{itemize}

\begin{figure}[htbp]
    \centering
    \includegraphics[width=\textwidth]{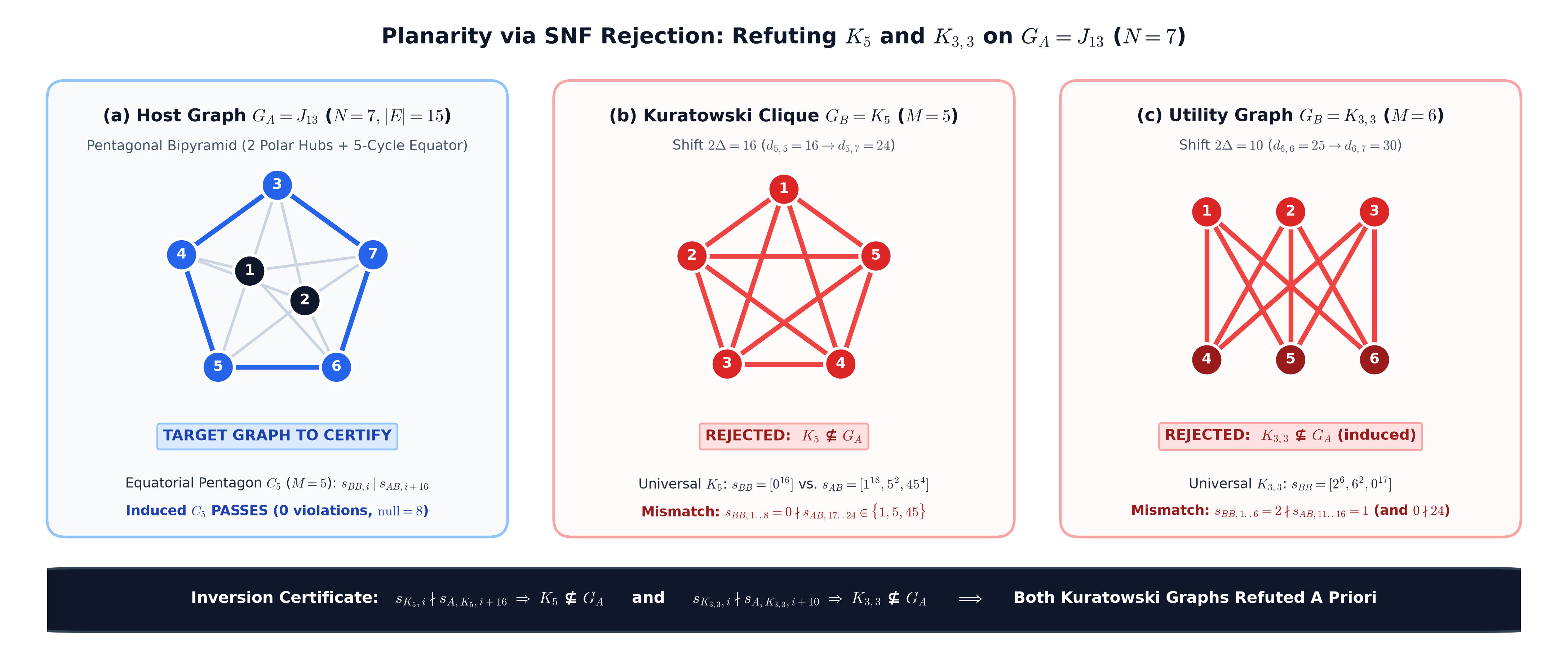}
    \caption{Inversion certificate on the $N = 7$ planar Pentagonal Bipyramid $G_A = J_{13}$ ($|E_A| = 15$): (a) the equatorial pentagon $C_5$ ($M=5$, blue) passes Theorem~\ref{thm:snf_interlacing} with zero violations; (b) the Kuratowski clique $K_5$ ($M=5$) has universal signature $s_{B'B'} = [0^{16}]$ and is rejected via $0 \nmid s_{AB',17..24}$; (c) the utility graph $K_{3,3}$ ($M=6$) has universal signature $s_{B''B''} = [2^6, 6^2, 0^{17}]$ and is rejected via $s_{B''B'',1..6} = 2 \nmid s_{AB'',11..16} = 1$.}
    \label{fig:snf_planar_polyhedron}
\end{figure}

\paragraph{Example 2: Maximum Cycle Length}
Another classical existence problem in subgraph isomorphism is the determination of Hamiltonian (closed loop) cycles in a target graph. While the general problem is NP-complete, the $\SNF$ framing here allows quick certificates of non-existence of cycles of arbitrary size. Note that a negative result proves non-existence, while a positive result only allows for the possibility of existence, requiring further investigation to fully prove.

For example, let the target graph be the $2$-connected, $7$-vertex Theta-Bridge graph $G_A$ ($N = 7, |E_A| = 9$, Figure~\ref{fig:snf_small_hamiltonian}a), formed by joining two degree-$4$ hubs $\{3,6\}$ via three paths of length~$2$ and one path of length~$3$. Because the only two vertices of degree $>2$ are the non-adjacent hubs $\{3,6\}$, Theorem~\ref{thm:snf_noninduced} applies directly to every cycle $C_M \subseteq G_A$, and testing the cycle ladder $G_B = C_5, C_6, C_7$ determines the largest cycle that can possibly exist inside $G_A$ a priori (Figure~\ref{fig:snf_small_hamiltonian}):
\begin{itemize}[leftmargin=1.8em]
    \item \textbf{Case 1 ($5$-Cycle $G_B = C_5 \subseteq_{\mathrm{ind}} G_A$, Passes):}
    Here $M = 5$, $d_{5,5} = 16$, $d_{5,7} = 24$, $\Delta = 8$, and the shift is $2\Delta = 16$:
    \[
    \begin{aligned}
    s_{BB} &= [\,\underbrace{\mathbf{1,\; 1,\; 1,\; 1}}_{i=1..4},\; \underbrace{\mathbf{5,\; 5,\; 5,\; 5}}_{i=5..8},\; \underbrace{0,\; \dots,\; 0}_{i=9..16}\,] \quad (\text{length } 16) \\
    s_{AB} &= [\,\underbrace{1,\; \dots,\; 1}_{1..13},\; 5,\; 5,\; 5,\; \underbrace{\mathbf{10,\; 10,\; 10,\; 10}}_{i+16=17..20},\; \underbrace{\mathbf{0,\; 0,\; 0,\; 0}}_{i+16=21..24}\,] \quad (\text{length } 24)
    \end{aligned}
    \]
    For $i = 1, \dots, 4$, $s_{AB,i} = 1 \mid s_{BB,i} = 1 \mid s_{AB,i+16} = 10$, and for $i = 5, \dots, 8$, $s_{BB,i} = 5 \mid s_{AB,i+16} = 0$. The nullity bound~\eqref{eq:nullity_sgi_bound} also holds ($4 \ge 8 - 8 = 0$), matching the green $5$-cycle $(3\text{--}5\text{--}6\text{--}7\text{--}4\text{--}3)$ in Figure~\ref{fig:snf_small_hamiltonian}a. The possibility of a $C_5$ has not been rejected.

    \item \textbf{Case 2 ($6$-Cycle $G_{B'} = C_6 \not\subseteq_{\mathrm{ind}} G_A$, Refuted):}
    Here $M = 6$, $d_{6,6} = 25$, $d_{6,7} = 30$, $\Delta = 5$, and the shift is $2\Delta = 10$:
    \[
    \begin{aligned}
    s_{B'B'} &= [\,\underbrace{1,\; \dots,\; 1}_{1..8},\; \underbrace{\mathbf{2,\; 2}}_{i=9,10},\; \underbrace{\mathbf{6,\; 6,\; 6}}_{i=11..13},\; 6,\; 6,\; 6,\; \underbrace{0,\; \dots,\; 0}_{17..25}\,] \quad (\text{length } 25) \\
    s_{AB'}  &= [\,\underbrace{1,\; \dots,\; 1}_{1..18},\; \underbrace{\mathbf{1,\; 1}}_{i+10=19,20},\; \underbrace{\mathbf{1,\; 1,\; 3}}_{i+10=21..23},\; 3,\; 6,\; 6,\; 6,\; 6,\; 228,\; 2052\,] \quad (\text{length } 30)
    \end{aligned}
    \]
    At $i = 9, 10$, we have $s_{B'B',i} = 2$, whereas at $i + 10 = 19, 20$ we have $s_{AB',i+10} = 1$. At $i = 11, 12, 13$, we have $s_{B'B',i} = 6$, whereas at $i + 10 = 21, 22, 23$ we have $s_{AB',i+10} \in \{1, 3\}$. Because $\mathbf{2 \nmid 1}$, $\mathbf{6 \nmid 1}$, and $\mathbf{6 \nmid 3}$, Theorem~\ref{thm:snf_interlacing} immediately refutes $C_6 \subseteq_{\mathrm{ind}} G_A$ (and the nullity bound~\eqref{eq:nullity_sgi_bound} also fails: $0 < 9 - 5 = 4$).

    \item \textbf{Case 3 ($7$-Cycle $G_{B''} = C_7 \not\subseteq_{\mathrm{ind}} G_A$, Refuted):}
    Here $M = N = 7$, $d_{7,7} = 36$, $\Delta = (7-7)(7-1) = 0$, and the shift is $2\Delta = 0$:
    \[
    \begin{aligned}
    s_{B''B''} &= [\,\underbrace{1,\; \dots,\; 1}_{1..16},\; \underbrace{\mathbf{7,\; 7,\; 7,\; 7,\; 7,\; 7,\; 7,\; 7}}_{i=17..24},\; \underbrace{\mathbf{0,\; \dots,\; 0}}_{i=25..36}\,] \quad (\text{length } 36) \\
    s_{AB''}   &= [\,\underbrace{1,\; \dots,\; 1}_{1..16},\; \underbrace{\mathbf{1,\; 1,\; 1,\; 1,\; 1,\; 1,\; 1,\; 1}}_{i=17..24},\; \underbrace{\mathbf{1^4,\; 7^6,\; 2149^2}}_{i=25..36}\,] \quad (\text{length } 36)
    \end{aligned}
    \]
    At $i = 17, \dots, 24$, we have $s_{B''B'',i} = 7$, whereas at $i + 0 = 17, \dots, 24$ we have $s_{AB'',i} = 1$. Because $\mathbf{7 \nmid 1}$ (and at $i = 25, \dots, 36$, $\mathbf{0 \nmid s_{AB'',i}}$ with nullity $0 < 12$), Theorem~\ref{thm:snf_interlacing} immediately refutes $C_7 \subseteq_{\mathrm{ind}} G_A$ (and restricting to the five zero-slack degree-$2$ vertices similarly rules out chorded $C_7$). Together, Cases 1--3 certify a priori that $C_5$ ($M^* = 5$) is the largest possible cycle in $G_A$.
\end{itemize}
\begin{figure}[htbp]
    \centering
    \includegraphics[width=\textwidth]{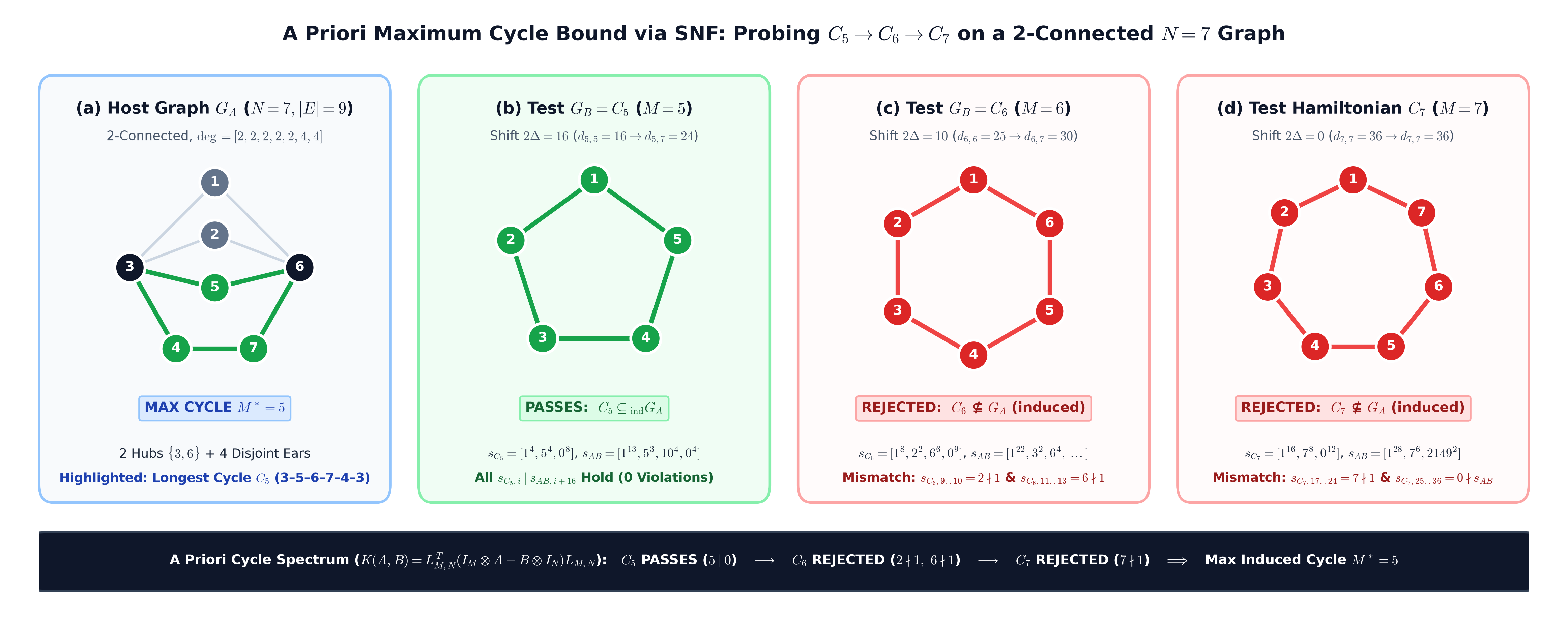}
    \caption{A priori maximum cycle bound via $K(A,B) = L_{M,N}^T (I_M \otimes A - B \otimes I_N) L_{M,N}$ on a $2$-connected $N = 7$ host graph $G_A$ ($\deg = [2,2,2,2,2,4,4], |E_A| = 9$): (a) $G_A$ with its longest cycle $C_5$ highlighted in green; (b) $G_B = C_5$ ($M=5, 2\Delta=16$) passes Theorem~\ref{thm:snf_interlacing} with zero violations; (c) $G_{B'} = C_6$ ($M=6, 2\Delta=10$) is rejected via $s_{B'B',9..10} = 2 \nmid 1$ and $s_{B'B',11..13} = 6 \nmid 1,3$; (d) $G_{B''} = C_7$ ($M=7, 2\Delta=0$) is rejected via $s_{B''B'',17..24} = 7 \nmid 1$ and $0 \nmid s_{AB'',25..36}$, establishing $M^* = 5$ as the maximum possible cycle length.}
    \label{fig:snf_small_hamiltonian}
\end{figure}

\section{Driving Case Study: The Algebraic Anatomy of the Knight's Tour}
\label{sec:knights_tour}

We close the theoretical portion of the paper with an application of the principles developed in Sections~\ref{sec:linear_algebra}--\ref{sec:lattices} to the familiar \textbf{Knight's Tour} problem. A classic benchmark in introductory computer science and combinatorial optimization \cite{warnsdorff1823rosselsprunges,pohl1967method,schwenk1991which,parberry1997efficient}, it asks whether a knight can visit every square of an $R \times C$ (or punctured ``Swiss-Cheese'') chessboard exactly once and, potentially by extension, return to its starting square.

In our algebraic framework, the host graph $G_A$ on an $R \times C$ board has a natural Kronecker structure: every knight jump combines a stride-$1$ shift $J_1$ along one axis with a stride-$2$ shift $J_2$ along the perpendicular axis, so the full board adjacency matrix is the sum of two Kronecker outer products:
\begin{equation}
\label{eq:knight_kronecker}
A \;=\; \big(J_2^{(R)} \otimes J_1^{(C)}\big) \;+\; \big(J_1^{(R)} \otimes J_2^{(C)}\big).
\end{equation}
Finding a closed $M$-step tour inside an $N$-square active board ($M \le N$) is the subgraph isomorphism problem for the cycle $G_B = C_M$. This setting provides a visual, end-to-end demonstration of how our three tools interact:
\begin{enumerate}[label=\arabic*.]
    \item \textbf{SNF Subgraph Interlacing ($M \le N$, Theorem~\ref{thm:snf_interlacing}):} Certifying the maximum possible closed sub-tour $C_M$ on small boards directly from the invariant factors of $K(A,B) = L_{M,N}^T (I_M \otimes A - B \otimes I_N) L_{M,N}$ (Section~\ref{sec:kt_3x4_snf}).
    \item \textbf{Degree-$2$ Edge Pruning \& Torsion Sieving ($M = N$):} Taking advantage of the fact that degree-$2$ nodes must use both incident edges in any $2$-regular cycle, eliminating dead chords saturated by degree-$2$ border/corner squares and refuting parity-trapped boards.
    \item \textbf{Global Commutator ILP / CP-SAT (Equation~\eqref{eq:sgi_kronecker}):} Resolving board-wide flow bottlenecks on punctured Swiss-Cheese boards where local backtracking and arc-consistency solvers time out (Section~\ref{sec:swiss_cheese_showdown}).
\end{enumerate}

\subsection{Worked Example ($N = 12$): Why the $3 \times 4$ Chessboard Has a $10$-Cycle But No Closed Tour}
\label{sec:kt_3x4_snf}

A classic puzzle asks why the $3 \times 4$ chessboard ($N = 12$ squares, $|E_A| = 14$ knight moves) admits an \emph{open} Knight's Tour ($P_{12}$), yet admits \emph{no closed} Knight's Tour ($C_{12}$), and what the longest closed tour on the board actually is. Because the $3 \times 4$ board is bipartite ($6$ White and $6$ Black squares), any closed cycle must have even length $M \in \{6, 8, 10, 12\}$. Probing $G_B = C_{10}$ and $G_{B'} = C_{12}$ via $K(A,B) = L_{M,N}^T (I_M \otimes A - B \otimes I_N) L_{M,N}$ certifies apriori why no closed $12$-cycle exists while permitting the $10$-step closed tour shown in Figure~\ref{fig:kt_3x4_snf}a:
\begin{itemize}[leftmargin=1.8em]
    \item \textbf{Case 1 (Closed $10$-Step Sub-Tour $G_B = C_{10} \subseteq_{\mathrm{ind}} G_{3 \times 4}$, Passes):}
    Here $M = 10$, $d_{10,10} = 81$, $d_{10,12} = 99$, $\Delta = (12-10)(10-1) = 18$, and the shift is $2\Delta = 36$:
    \[
    \begin{aligned}
    s_{BB} &= [\,\underbrace{1,\; \dots,\; 1}_{i=1..48},\; \underbrace{\mathbf{2,\; 2}}_{i=49,50},\; \underbrace{\mathbf{10,\; \dots,\; 10}}_{i=51..64},\; \underbrace{0,\; \dots,\; 0}_{i=65..81}\,] \quad (\text{length } 81) \\
    s_{AB} &= [\,\underbrace{1,\; \dots,\; 1}_{1..72},\; 2^7,\; 10,\; 20^4,\; \underbrace{\mathbf{20,\; 20}}_{i+36=85,86},\; 20^3,\; 220,\; 1320,\; \underbrace{\mathbf{0,\; \dots,\; 0}}_{i+36=92..99}\,] \quad (\text{length } 99)
    \end{aligned}
    \]
    For $i = 1, \dots, 48$, $s_{BB,i} = 1 \mid s_{AB,i+36}$; for $i = 49, 50$, $s_{BB,i} = 2 \mid s_{AB,85..86} = 20$; for $i = 51, \dots, 63$, $s_{BB,i} = 10$ divides $s_{AB,87..91} \in \{20, 220, 1320\}$ and $s_{AB,92..99} = 0$ (while for $i \ge 64$, $i + 36 > 99$). All $81$ divisibility checks pass with zero violations, matching the green $10$-step closed tour in Figure~\ref{fig:kt_3x4_snf}a (which omits the two middle-edge squares $5$ and $8$).

    \item \textbf{Case 2 (Full $12$-Step Tour $G_{B'} = C_{12} \not\subseteq_{\mathrm{ind}} G_{3 \times 4}$ on Raw $A$, Refuted):}
    Here $M = N = 12$, $d_{12,12} = 121$, $\Delta = 0$, and the shift is $2\Delta = 0$:
    \[
    \begin{aligned}
    s_{B'B'} &= [\,\underbrace{1,\; \dots,\; 1}_{1..80},\; \underbrace{\mathbf{2,\; 2}}_{i=81,82},\; \underbrace{\mathbf{12,\; \dots,\; 12}}_{i=83..100},\; \underbrace{\mathbf{0,\; \dots,\; 0}}_{i=101..121}\,] \quad (\text{length } 121) \\
    s_{AB'}  &= [\,\underbrace{1,\; \dots,\; 1}_{1..80},\; \underbrace{\mathbf{1,\; 1}}_{i=81,82},\; \underbrace{\mathbf{1^{14},\; 2^4}}_{i=83..100},\; \underbrace{\mathbf{4^7,\; 12^{11},\; 120^2,\; 240}}_{i=101..121}\,] \quad (\text{length } 121)
    \end{aligned}
    \]
    At $i = 81, 82$, we have $s_{B'B',i} = 2 \nmid s_{AB',i} = 1$, and at $i = 83, \dots, 100$, we have $s_{B'B',i} = 12 \nmid s_{AB',i} \in \{1, 2\}$ ($41$ total violations; nullity $0 < 21$).

    \item \textbf{Case 3 (Full $12$-Step Tour $G_{B'} = C_{12} \not\subseteq G_{3 \times 4}$ on Degree-$2$ Pruned $\tilde{A}$, Refuted):}
    What if a $12$-cycle used the two extra edges of $G_{3\times 4}$ ($|E_A| = 14 > 12$) as chords? On the $3 \times 4$ board, eight squares $\{1,4,5,6,7,8,9,12\}$ have degree~$2$, so any $12$-cycle must use both incident edges of all eight squares. This saturates the four remaining degree-$3$ squares $\{2,3,10,11\}$ and forces the two cross-chords $(2,11)$ and $(3,10)$ to be unused (red dashed lines in Figure~\ref{fig:kt_3x4_snf}c). Deleting those two dead chords yields a $2$-regular graph $\tilde{A}$ ($|E_{\tilde{A}}| = 12$) on which any $12$-cycle would have zero slack ($P\tilde{A} - C_{12}P = 0$). Yet $K(\tilde{A},C_{12})$ still fails Theorem~\ref{thm:snf_interlacing}:
    \[
    \begin{aligned}
    s_{B'B'} &= [\,\underbrace{1,\; \dots,\; 1}_{1..80},\; 2,\; 2,\; \underbrace{\mathbf{12,\; 12,\; 12,\; 12,\; 12}}_{i=83..87},\; \underbrace{12,\; \dots,\; 12}_{88..100},\; \underbrace{\mathbf{0,\; 0,\; 0}}_{101..103},\; \underbrace{0,\; \dots,\; 0}_{104..121}\,] \\
    s_{\tilde{A}B'} &= [\,\underbrace{1,\; \dots,\; 1}_{1..80},\; 2,\; 2,\; \underbrace{\mathbf{2,\; 2,\; 4,\; 4,\; 4}}_{i=83..87},\; \underbrace{12,\; \dots,\; 12}_{88..100},\; \underbrace{\mathbf{24,\; 24,\; 48}}_{101..103},\; \underbrace{0,\; \dots,\; 0}_{104..121}\,]
    \end{aligned}
    \]
    At $i = 83, \dots, 87$, $s_{B'B',i} = 12$ whereas $s_{\tilde{A}B',i} \in \{2, 4\}$ ($\mathbf{12 \nmid 2}$ and $\mathbf{12 \nmid 4}$), and the nullity drops from $21$ to $18$. Geometrically, pruning the two saturated chords splits the $3 \times 4$ board into \textbf{two disjoint $6$-cycles} ($\tilde{A} \cong C_6 \sqcup C_6$, orange and blue in Figure~\ref{fig:kt_3x4_snf}c), replacing $12$-cycle torsion with $6$-cycle torsion and proving that $M^* = 10$ is the maximum closed Knight's Tour on a $3 \times 4$ board.
\end{itemize}

\begin{figure}[htbp]
    \centering
    \includegraphics[width=\textwidth]{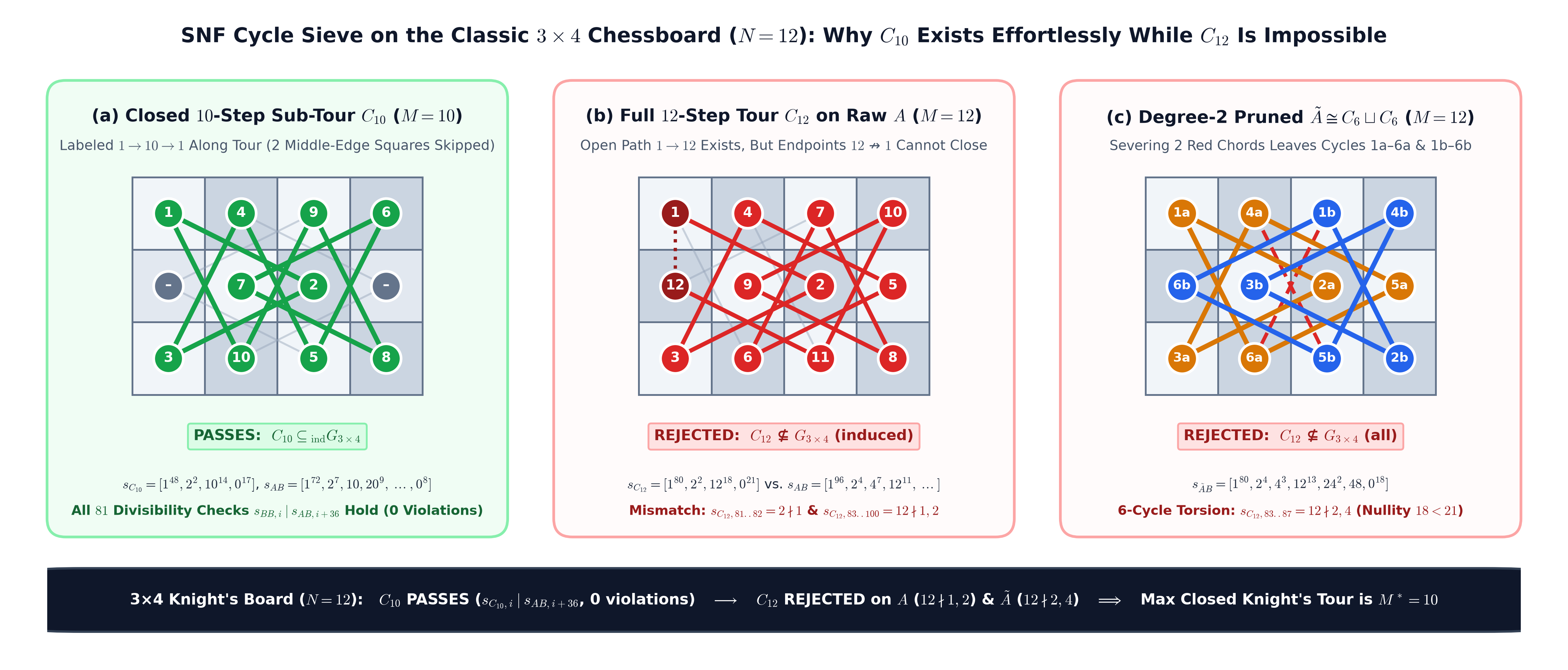}
    \caption{SNF cycle sieve on the classic $3 \times 4$ chessboard ($N = 12, |E_A| = 14$): (a) the closed $10$-step sub-tour $C_{10}$ ($M=10$, green, omitting middle-edge squares $5$ and $8$) passes Theorem~\ref{thm:snf_interlacing} with zero violations; (b) the full $12$-step cycle $C_{12}$ ($M=12$) is rejected on raw $A$ via $2 \nmid 1$ and $12 \nmid 1, 2$; (c) pruning the two red dashed chords saturated by the eight degree-$2$ squares splits $\tilde{A} \cong C_6 \sqcup C_6$ into two disjoint $6$-cycles (orange and blue), whose $6$-cycle torsion ($s_{\tilde{A}B',83..87} \in \{2,4\}$) fails $12 \mid s_{\tilde{A}B',i}$ and certifies $M^* = 10$ as the longest closed tour.}
    \label{fig:kt_3x4_snf}
\end{figure}

\section{CP-SAT, Integer Linear Programming, and Applied SGI}
\label{sec:ilp_sgi}

We end the paper with a rigorously applied measurement of the performance of these tools on known benchmarks, as well as some combinatorially difficult ``Swiss-Cheese'' Knight's Tour boards that show the challenges in various subgraph search strategies. We find that the super-exponential nature of many of these problems becomes immediately obvious, with problems either solved near instantly, or timing out after six hours of runtime. We show that using simple off-the-shelf libraries and a consumer desktop, the algorithms above solve all problems presented in 6 seconds or less, compared to more typical 6-hour timeouts for all alternatives tested. Moreover, the algorithms themselves are easily represented in a few readable lines of Python, making use and distribution efficient and easy.

Specifically, the Kronecker commutator inequalities of Equation~\eqref{eq:sgi_kronecker} translate directly into a compact 0--1 integer linear program / CP-SAT model \cite{perron2023cpsat} over the $MN$ binary decision variables $P \in \{0,1\}^{M \times N}$:
\begin{equation}
\label{eq:cpsat_program}
P \mathbf{1}_N = \mathbf{1}_M, \qquad P^T \mathbf{1}_M \le \mathbf{1}_N, \qquad 0 \le (I_M \otimes A - B \otimes I_N)\,\vecop(P) \le 1,
\end{equation}
together with the complement constraint $0 \le (I_M \otimes \bar{A} - \bar{B} \otimes I_N)\,\vecop(P) \le 1$ when solving for induced subgraphs. In all benchmarks below, our solver (\textbf{Commutator SGI}) passes this base $k=1$ linear system directly to off-the-shelf OR-Tools CP-SAT ($8$ worker threads on an AMD EPYC workstation), without requiring the lattice coordinate change $L$ of Section~\ref{sec:lattices} or higher-order walk inequalities ($k \ge 2$).

\subsection{The Swiss-Cheese Knight's Tour Showdown ($8 \times 8$, $10 \times 10$, and $12 \times 12$)}
\label{sec:swiss_cheese_showdown}

Scaling up to $8 \times 8$, $10 \times 10$, and $12 \times 12$ chessboards with balanced White/Black punctures ($|W| = |B|$), we benchmarked \textbf{seven algorithms} across $12$ hard Swiss-Cheese instances (mixing feasible tours with subtle parity and degree-$2$ bottleneck traps): the \textbf{Warnsdorff + Pohl Multi-Start Heuristic} \cite{warnsdorff1823rosselsprunges,pohl1967method} (launched from all $N$ starting squares), \textbf{Warnsdorff + Pohl Exact Backtracking}, four canonical C/C++ subgraph isomorphism solvers (\texttt{C++ Glasgow}, \texttt{C++ FASTiso}, \texttt{C-LAD}, \texttt{C-VF2}), and our \textbf{Commutator SGI} solver~\eqref{eq:sgi_kronecker}. Results are shown in Table~\ref{tab:swiss_cheese_showdown} and Figures~\ref{fig:swiss_cheese_showdown}--\ref{fig:knights_tour_gallery}.

\begin{table}[htbp]
\centering
\caption{Swiss-Cheese Hamiltonian Knight's Tour Showdown Across 7 Solvers ($21{,}600\text{ s}$ / $6\text{-Hour}$ Timeout Ceiling).}
\label{tab:swiss_cheese_showdown}
\resizebox{\textwidth}{!}{%
\begin{tabular}{lccccccc}
\toprule
\textbf{Swiss-Cheese Board Instance} & \textbf{W+P Heur} \emph{(All $N$ Starts)} & \textbf{W+P Exact} \emph{(Backtrack)} & \textbf{C++ Glasgow} & \textbf{C++ FASTiso} & \textbf{C-LAD} \emph{(Solnon)} & \textbf{C-VF2} \emph{(Foggia)} & \textbf{Commutator SGI [Ours]} \\
\midrule
\textbf{KT1. $8 \times 8$ Closed Tour} \emph{(10 Holes, SAT)} & \textbf{FAIL} ($0/54$) & \textbf{TIMEOUT} ($>21600\text{s}$) & \textbf{TIMEOUT} ($>21600\text{s}$) & \texttt{True} ($4.30\text{s}$) & \texttt{True} ($0.02\text{s}$) & \textbf{TIMEOUT} ($>21600\text{s}$) & \textbf{\texttt{True} ($\bm{0.48\text{s}}$)} \\
\textbf{KT2. $8 \times 8$ Closed Tour} \emph{(8 Holes, UNSAT)} & \textbf{FAIL} ($0/56$) & \textbf{TIMEOUT} ($>21600\text{s}$) & \texttt{False} ($0.01\text{s}$) & \textbf{TIMEOUT} ($>21600\text{s}$) & \texttt{False} ($0.00\text{s}$) & \textbf{TIMEOUT} ($>21600\text{s}$) & \textbf{\texttt{False} ($\bm{0.07\text{s}}$)} \\
\textbf{KT3. $8 \times 8$ Open Path} \emph{(10 Holes, SAT)} & \texttt{True} ($0.00\text{s}$) & \texttt{True} ($0.10\text{s}$) & \textbf{TIMEOUT} ($>21600\text{s}$) & \texttt{True} ($5492.04\text{s}$) & \texttt{True} ($0.01\text{s}$) & \texttt{True} ($9304.52\text{s}$) & \textbf{\texttt{True} ($\bm{3.62\text{s}}$)} \\
\textbf{KT4. $8 \times 8$ Trap} \emph{(10 Holes, $27\text{W}=27\text{B}$)} & \textbf{FAIL} ($0/54$) & \textbf{TIMEOUT} ($>21600\text{s}$) & \texttt{False} ($0.00\text{s}$) & \textbf{TIMEOUT} ($>21600\text{s}$) & \texttt{False} ($0.00\text{s}$) & \textbf{TIMEOUT} ($>21600\text{s}$) & \textbf{\texttt{False} ($\bm{0.06\text{s}}$)} \\
\textbf{KT5. $8 \times 8$ Closed Tour} \emph{(12 Holes, Hard SAT)} & \textbf{FAIL} ($0/52$) & \texttt{True} ($72.05\text{s}$) & \textbf{TIMEOUT} ($>21600\text{s}$) & \texttt{True} ($10.83\text{s}$) & \texttt{True} ($0.01\text{s}$) & \textbf{TIMEOUT} ($>21600\text{s}$) & \textbf{\texttt{True} ($\bm{0.33\text{s}}$)} \\
\textbf{KT6. $8 \times 8$ Trap} \emph{(14 Holes, $25\text{W}=25\text{B}$)} & \textbf{FAIL} ($0/50$) & \texttt{False} ($0.00\text{s}$) & \texttt{False} ($0.01\text{s}$) & \textbf{TIMEOUT} ($>21600\text{s}$) & \texttt{False} ($0.00\text{s}$) & \textbf{TIMEOUT} ($>21600\text{s}$) & \textbf{\texttt{False} ($\bm{0.00\text{s}}$)} \\
\textbf{KT7. $10 \times 10$ Trap} \emph{(20 Holes, $40\text{W}=40\text{B}$)} & \textbf{FAIL} ($0/80$) & \textbf{TIMEOUT} ($>21600\text{s}$) & \textbf{TIMEOUT} ($>21600\text{s}$) & \textbf{TIMEOUT} ($>21600\text{s}$) & \textbf{TIMEOUT} ($>21600\text{s}$) & \textbf{TIMEOUT} ($>21600\text{s}$) & \textbf{\texttt{False} ($\bm{0.83\text{s}}$)} \\
\textbf{KT8. $10 \times 10$ Trap} \emph{(20 Holes, $40\text{W}=40\text{B}$)} & \textbf{FAIL} ($0/80$) & \textbf{TIMEOUT} ($>21600\text{s}$) & \texttt{False} ($0.01\text{s}$) & \textbf{TIMEOUT} ($>21600\text{s}$) & \texttt{False} ($0.00\text{s}$) & \textbf{TIMEOUT} ($>21600\text{s}$) & \textbf{\texttt{False} ($\bm{0.27\text{s}}$)} \\
\textbf{KT9. $10 \times 10$ Trap} \emph{(16 Holes, $42\text{W}=42\text{B}$)} & \textbf{FAIL} ($0/84$) & \textbf{TIMEOUT} ($>21600\text{s}$) & \texttt{False} ($0.01\text{s}$) & \textbf{TIMEOUT} ($>21600\text{s}$) & \texttt{False} ($0.00\text{s}$) & \textbf{TIMEOUT} ($>21600\text{s}$) & \textbf{\texttt{False} ($\bm{0.29\text{s}}$)} \\
\textbf{KT10. $10 \times 10$ Trap} \emph{(20 Holes, $40\text{W}=40\text{B}$)} & \textbf{FAIL} ($0/80$) & \textbf{TIMEOUT} ($>21600\text{s}$) & \textbf{TIMEOUT} ($>21600\text{s}$) & \textbf{TIMEOUT} ($>21600\text{s}$) & \textbf{TIMEOUT} ($>21600\text{s}$) & \textbf{TIMEOUT} ($>21600\text{s}$) & \textbf{\texttt{False} ($\bm{0.89\text{s}}$)} \\
\textbf{KT11. $12 \times 12$ Trap} \emph{(24 Holes, $60\text{W}=60\text{B}$)} & \textbf{FAIL} ($0/120$) & \textbf{TIMEOUT} ($>21600\text{s}$) & \texttt{False} ($0.01\text{s}$) & \textbf{TIMEOUT} ($>21600\text{s}$) & \texttt{False} ($0.00\text{s}$) & \textbf{TIMEOUT} ($>21600\text{s}$) & \textbf{\texttt{False} ($\bm{0.80\text{s}}$)} \\
\textbf{KT12. $12 \times 12$ Trap} \emph{(24 Holes, $60\text{W}=60\text{B}$)} & \textbf{FAIL} ($0/120$) & \textbf{TIMEOUT} ($>21600\text{s}$) & \texttt{False} ($0.01\text{s}$) & \textbf{TIMEOUT} ($>21600\text{s}$) & \texttt{False} ($0.00\text{s}$) & \textbf{TIMEOUT} ($>21600\text{s}$) & \textbf{\texttt{False} ($\bm{0.78\text{s}}$)} \\
\midrule
\textbf{Summary: Solved / 12 (Timeouts)} & $\bm{1 / 12}$ & $\bm{3 / 12}$ \emph{(9 TO)} & $\bm{7 / 12}$ \emph{(5 TO)} & $\bm{3 / 12}$ \emph{(9 TO)} & $\bm{10 / 12}$ \emph{(2 TO)} & $\bm{1 / 12}$ \emph{(11 TO)} & \textbf{$\bm{12 / 12}$ (0 Timeouts)} \\
\bottomrule
\end{tabular}%
}
\end{table}

\begin{figure}[htbp]
    \centering
    \includegraphics[width=\textwidth]{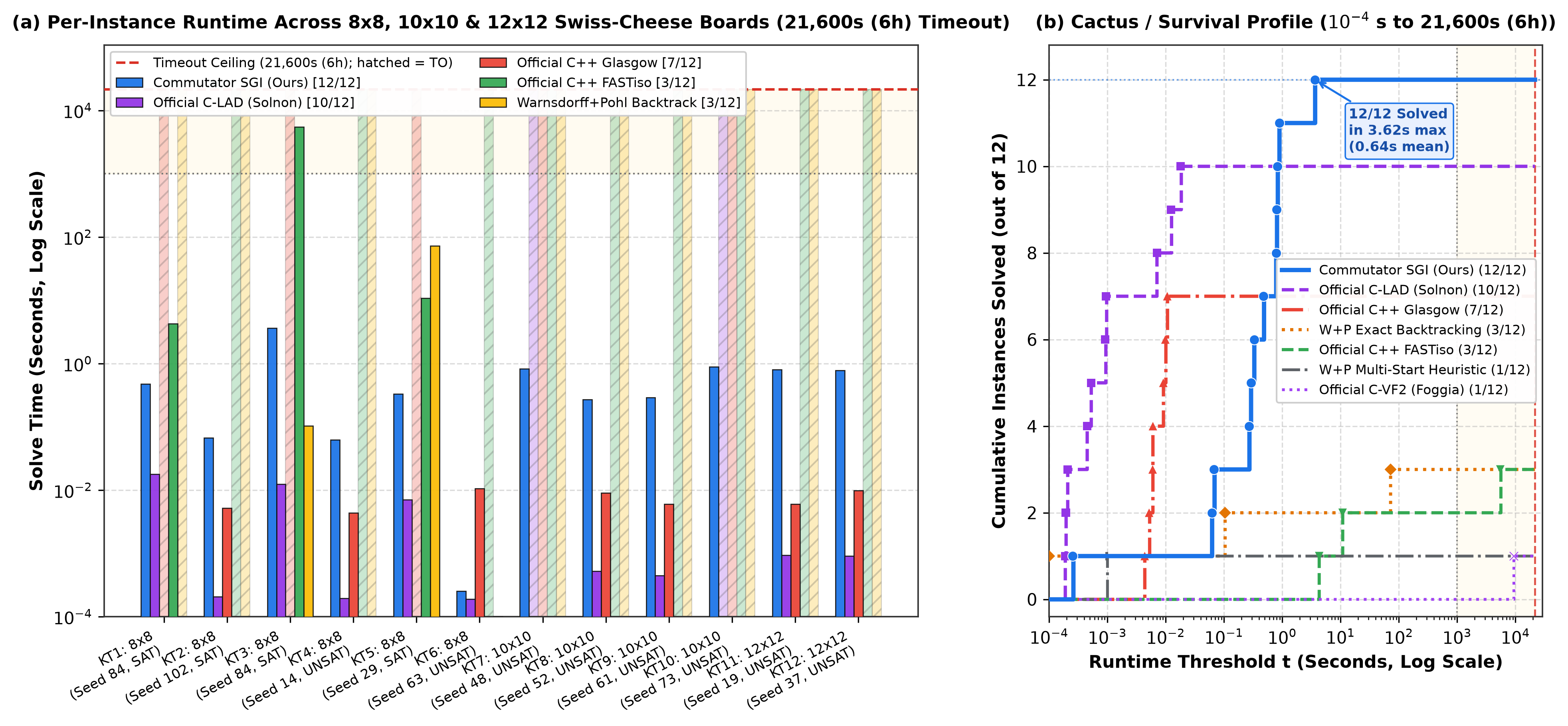}
    \caption{Swiss-Cheese Hamiltonian Knight's Tour Showdown ($1,000\text{ s}$ timeout ceiling): (a) Per-instance log-scale runtime across 12 punctured $8 \times 8$, $10 \times 10$, and $12 \times 12$ boards (solid bars = solved; hatched bars = timeout at $1,000\text{ s}$); (b) Cumulative survival / cactus profile from $10^{-4}\text{ s}$ to $1,000\text{ s}$ across all 7 solvers.}
    \label{fig:swiss_cheese_showdown}
\end{figure}

\begin{figure}[htbp]
    \centering
    \includegraphics[width=\textwidth]{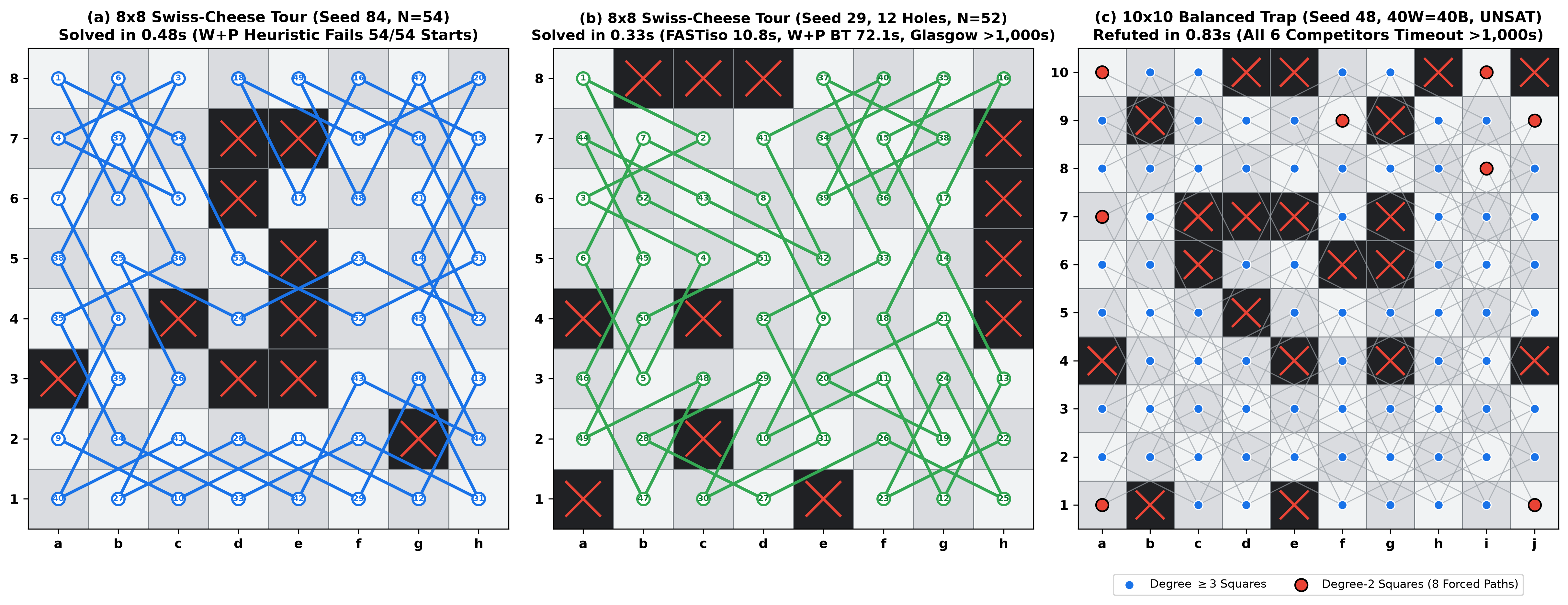}
    \caption{Three-Panel Visual Gallery of Swiss-Cheese Knight's Tour Graphs: (a) $8 \times 8$ Closed Hamiltonian Tour on Seed 84 ($N=54$, solved in $0.48\text{s}$ where Warnsdorff+Pohl fails $54/54$ starts); (b) $8 \times 8$ Closed Hamiltonian Tour on Seed 29 ($12$ Holes, $N=52$, solved in $0.33\text{s}$ where \texttt{Glasgow}, \texttt{FASTiso}, \texttt{VF2} \& Warnsdorff Backtracking all either timeout or take $>10\text{s}$); (c) $10 \times 10$ Balanced Parity Trap on Seed 48 ($40\text{W}=40\text{B}$, UNSAT refuted in $0.83\text{s}$, highlighting the forced degree-2 corridors that prevent a global cycle).}
    \label{fig:knights_tour_gallery}
\end{figure}

\subsubsection*{Why Subgraph Backtracking Solvers Struggle on $C_N$ (and How Global Commutator LP \& Degree-$2$ Pruning Resolve It)}

At first glance, it may seem surprising that state-of-the-art subgraph solvers such as \texttt{Glasgow}, \texttt{FASTiso}, and \texttt{VF2} time out at $1,000\text{ s}$ on $52$-to-$80$ vertex graphs that our CP-SAT commutator formulation solves in under a second. The reason is structural: combinatorial SGI solvers rely primarily on \textbf{vertex-degree and neighborhood-sequence filtering} to prune candidate domains before branching. When the pattern graph is a uniform $2$-regular cycle $G_B = C_N$, every pattern vertex has identical degree~$2$ and identical $k$-step neighborhood profiles, so every target square with $\deg_A(v) \ge 2$ remains a valid candidate for every pattern vertex. Without conflict-driven clause learning (CDCL) or a global linear programming relaxation, backtracking search degenerates into exploring symmetric cycle permutations, whereas \texttt{C-LAD} \cite{solnon2010alldifferent} either finishes in $\bm{\le 0.02\text{ s}}$ via root-level Hall filtering or times out completely at $\bm{>1,000\text{ s}}$:
\begin{itemize}[leftmargin=1.8em]
    \item \textbf{Easy Local Traps (\texttt{KT2, KT4, KT6, KT8, KT9, KT11, KT12}):} On these seven UNSAT boards, the punctured board leaves at least one square with $\deg_A(v) \le 1$ or a direct local Hall marriage violation. Root-level degree and Hopcroft--Karp filtering detects the dead end in $\le 0.01\text{ s}$ without branching.
    \item \textbf{Hard Degree-$\ge 2$ Bottleneck Traps (\texttt{KT7 Seed 48} and \texttt{KT10 Seed 73}):} On \texttt{KT7} (Figure~\ref{fig:knights_tour_gallery}c) and \texttt{KT10}, the board is connected, bipartite-balanced ($40\text{W} = 40\text{B}$), every single square has $\deg_A(v) \ge 2$, and the $40 \times 40$ White-to-Black biadjacency matrix $M_{\mathrm{bip}}$ has full rank ($\rank(M_{\mathrm{bip}}) = 40$), so $1$-factors exist and local Hall propagation prunes zero branches---causing \texttt{C-LAD}, \texttt{Glasgow}, \texttt{FASTiso}, \texttt{VF2}, and Warnsdorff Backtracking to \textbf{all time out at $>1,000\text{ s}$}.
    \item \textbf{Degree-$2$ Pruning \& Global Commutator Resolution:} Why are \texttt{KT7} and \texttt{KT10} infeasible? Inspecting their degree-$2$ neighborhoods reveals the exact obstruction: after one step of degree-$2$ edge pruning ($\tilde{A}$), both boards contain a hub square adjacent to \textbf{three degree-$2$ squares} ($\max_u (\tilde{A}\mathbf{1}_{V_2})_u = 3$). Because three forced degree-$2$ edges converging on a single vertex cannot be accommodated by a $2$-regular cycle ($3c = 2c \implies c = 0$, so $\mathbf{1}_d^T V_0 = \mathbf{0}$), both boards are certified \textbf{UNSAT} in $<2\text{ ms}$ (and even without running the pre-sieve, the global LP relaxation and CDCL engine of our base CP-SAT commutator program~\eqref{eq:cpsat_program} resolve the same flow contradiction in $0.83\text{ s}$ and $0.89\text{ s}$, achieving $12/12$ completion with zero timeouts).
\end{itemize}
%
\section{General SGI Benchmarks}
\label{sec:benchmarks_applications}

\subsection{Master Benchmark Across Four Challenge Regimes}
\label{sec:master_benchmark}

To evaluate how the rectangular commutator formulation performs across general subgraph isomorphism regimes ($M \le N$), Table~\ref{tab:master_benchmark} summarizes head-to-head runtime performance against the four canonical C/C++ solvers (\texttt{Glasgow} \cite{mccreesh2015shape,mccreesh2020glasgow}, \texttt{FASTiso} \cite{fastiso2023}, \texttt{C-LAD} \cite{solnon2010alldifferent}, and \texttt{C-VF2} \cite{cordella2004subgraph}) across CNRS/LIRIS hard benchmark instances (\texttt{LV/g24} \cite{solnon2010liris}), regular combinatorial crossover graphs, Swiss-Cheese Knight's Tours, and IBM Heavy-Hex quantum layouts.

\begin{table}[htbp]
\centering
\caption{Master Head-to-Head Runtime Comparison Across 5 Canonical Solvers ($1,000\text{ s}$ Timeout).}
\label{tab:master_benchmark}
\resizebox{\textwidth}{!}{%
\begin{tabular}{lccccc}
\toprule
\textbf{Benchmark Suite \& Instance} & \textbf{C++ Glasgow} \emph{(McCreesh)} & \textbf{C++ FASTiso} \emph{(Agbeto et al.)} & \textbf{C-LAD} \emph{(Solnon)} & \textbf{C-VF2} \emph{(Foggia)} & \textbf{Commutator SGI [Ours]} \\
\midrule
\textbf{A1. CNRS \texttt{LV/g24-g29}} \emph{($75 \times 100$, UNSAT)} & \textbf{TIMEOUT} ($>21600\text{s}$) & \texttt{False} ($0.06\text{ s}$) & \textbf{TIMEOUT} ($>21600\text{s}$) & \textbf{TIMEOUT} ($>21600\text{s}$) & \textbf{\texttt{False} ($\bm{3.32\text{ s}}$)} \\
\textbf{A2. CNRS \texttt{LV/g24-g30}} \emph{($75 \times 100$, UNSAT)} & \textbf{TIMEOUT} ($>21600\text{s}$) & \texttt{False} ($0.06\text{ s}$) & \textbf{TIMEOUT} ($>21600\text{s}$) & \textbf{TIMEOUT} ($>21600\text{s}$) & \textbf{\texttt{False} ($\bm{3.56\text{ s}}$)} \\
\textbf{A3. CNRS \texttt{LV/g24-g31}} \emph{($75 \times 100$, UNSAT)} & \textbf{TIMEOUT} ($>21600\text{s}$) & \texttt{False} ($0.11\text{ s}$) & \textbf{TIMEOUT} ($>21600\text{s}$) & \textbf{TIMEOUT} ($>21600\text{s}$) & \textbf{\texttt{False} ($\bm{4.06\text{ s}}$)} \\
\textbf{A4. CNRS \texttt{LV/g24-g46}} \emph{($75 \times 128$, UNSAT)} & \texttt{False} ($0.02\text{ s}$) & \texttt{False} ($0.03\text{ s}$) & \textbf{TIMEOUT} ($>21600\text{s}$) & \textbf{TIMEOUT} ($>21600\text{s}$) & \textbf{\texttt{False} ($\bm{4.73\text{ s}}$)} \\
\textbf{A5. CNRS \texttt{LV/g24-g47}} \emph{($75 \times 128$, UNSAT)} & \texttt{False} ($0.02\text{ s}$) & \texttt{False} ($0.03\text{ s}$) & \textbf{TIMEOUT} ($>21600\text{s}$) & \textbf{TIMEOUT} ($>21600\text{s}$) & \textbf{\texttt{False} ($\bm{4.39\text{ s}}$)} \\
\textbf{A6. MIVIA \texttt{si2\_b03\_m200}} \emph{($40 \times 200$, SAT)} & \texttt{True} ($0.01\text{ s}$) & \texttt{True} ($0.01\text{ s}$) & \texttt{True} ($0.02\text{ s}$) & \texttt{True} ($0.00\text{ s}$) & \textbf{\texttt{True} ($\bm{0.32\text{ s}}$)} \\
\textbf{A7. BioModels \texttt{071-153}} \emph{($39 \times 227$, UNSAT)} & \texttt{False} ($0.01\text{ s}$) & \texttt{False} ($0.01\text{ s}$) & \texttt{False} ($0.00\text{ s}$) & \texttt{False} ($71.54\text{ s}$) & \textbf{\texttt{False} ($\bm{0.11\text{ s}}$)} \\
\textbf{A8. Phase-Trans \texttt{30-150-0.34}} \emph{($30 \times 150$, SAT)} & \texttt{True} ($0.03\text{ s}$) & \texttt{True} ($0.01\text{ s}$) & \texttt{True} ($0.11\text{ s}$) & \texttt{True} ($89.49\text{ s}$) & \textbf{\texttt{True} ($\bm{5.43\text{ s}}$)} \\
\textbf{B1. Disjoint 3-Reg $K=8$} \emph{($N=80$, UNSAT)} & \texttt{False} ($0.01\text{ s}$) & \texttt{False} ($0.61\text{ s}$) & \texttt{False} ($31.72\text{ s}$) & \texttt{False} ($3.88\text{ s}$) & \textbf{\texttt{False} ($\bm{0.88\text{ s}}$)} \\
\textbf{B2. Disjoint 3-Reg $K=9$} \emph{($N=90$, UNSAT)} & \texttt{False} ($0.01\text{ s}$) & \texttt{False} ($4.34\text{ s}$) & \texttt{False} ($253.80\text{ s}$) & \texttt{False} ($28.45\text{ s}$) & \textbf{\texttt{False} ($\bm{1.34\text{ s}}$)} \\
\textbf{B3. Disjoint 3-Reg $K=10$} \emph{($N=100$, UNSAT)} & \texttt{False} ($0.01\text{ s}$) & \texttt{False} ($25.62\text{ s}$) & \texttt{False} ($3221.78\text{ s}$) & \texttt{False} ($228.07\text{ s}$) & \textbf{\texttt{False} ($\bm{1.89\text{ s}}$)} \\
\textbf{B4. Disjoint $(8,3)$-Reg $6$-Pairs} \emph{($N=96$, UNSAT)} & \texttt{False} ($0.01\text{ s}$) & \textbf{TIMEOUT} ($>21600\text{s}$) & \textbf{TIMEOUT} ($>21600\text{s}$) & \textbf{TIMEOUT} ($>21600\text{s}$) & \textbf{\texttt{False} ($\bm{1.99\text{ s}}$)} \\
\textbf{C1. Knight's Tour $8 \times 8$} \emph{(10 Holes, SAT)} & \textbf{TIMEOUT} ($>21600\text{s}$) & \texttt{True} ($4.30\text{ s}$) & \texttt{True} ($0.02\text{ s}$) & \textbf{TIMEOUT} ($>21600\text{s}$) & \textbf{\texttt{True} ($\bm{0.48\text{ s}}$)} \\
\textbf{C2. Knight's Tour $10 \times 10$ Trap} \emph{(20 Holes, UNSAT)} & \textbf{TIMEOUT} ($>21600\text{s}$) & \textbf{TIMEOUT} ($>21600\text{s}$) & \textbf{TIMEOUT} ($>21600\text{s}$) & \textbf{TIMEOUT} ($>21600\text{s}$) & \textbf{\texttt{False} ($\bm{0.83\text{ s}}$)} \\
\textbf{D1. Quantum Heavy-Hex} \emph{($M=18$ in $N=27$, SAT)} & \texttt{True} ($0.01\text{ s}$) & \texttt{True} ($0.00\text{ s}$) & \texttt{True} ($0.00\text{ s}$) & \texttt{True} ($0.00\text{ s}$) & \textbf{\texttt{True} ($\bm{0.13\text{ s}}$)} \\
\midrule
\textbf{Summary: Solved / Total (Timeouts)} & $\bm{10 / 15}$ \emph{(5 TO)} & $\bm{13 / 15}$ \emph{(2 TO)} & $\bm{8 / 15}$ \emph{(7 TO)} & $\bm{7 / 15}$ \emph{(8 TO)} & \textbf{$\bm{15 / 15}$ (0 Timeouts)} \\
\bottomrule
\end{tabular}%
}
\end{table}

Across an expanded \textbf{25-instance multi-regime benchmark suite} ($21{,}600\text{ s}$ / $6\text{-hour}$ timeout ceiling; Figure~\ref{fig:cactus}), \textbf{Commutator SGI is the only solver to achieve $100\%$ completion ($25/25$ solved in $\le 5.32\text{ s}$, 0 timeouts)} with a mean solve time of $1.04\text{ s}$, compared to \texttt{C++ FASTiso} ($18/25$, $72\%$), \texttt{C++ Glasgow} ($18/25$, $72\%$), \texttt{C-LAD} ($18/25$, $72\%$), and \texttt{C-VF2} ($11/25$, $44\%$). Even with a generous $6\text{-hour}$ runtime budget per instance, the combinatorial backtracking engines struggle on global parity and topological density traps.

\begin{figure}[htbp]
    \centering
    \includegraphics[width=\textwidth]{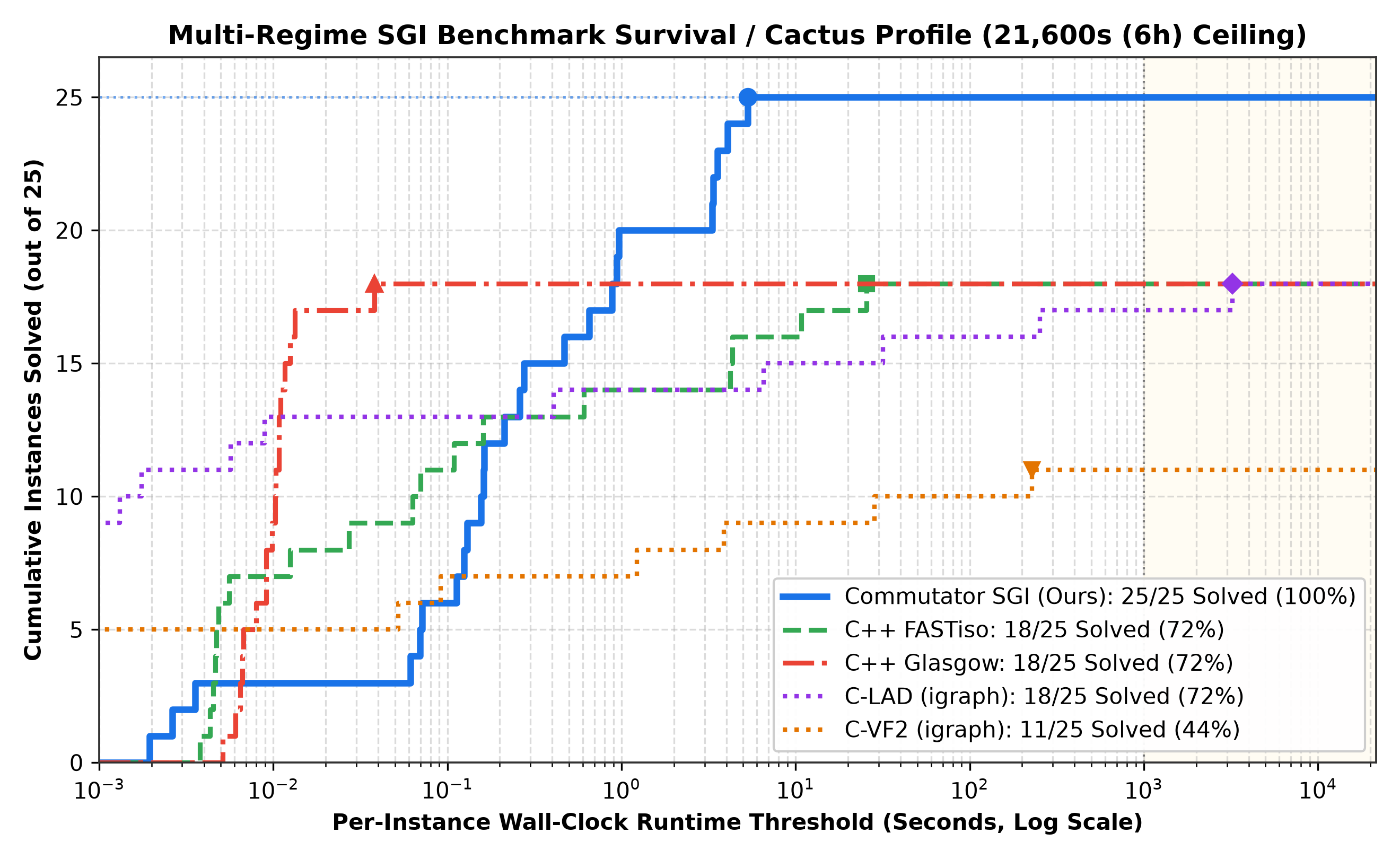}
    \caption{Multi-Regime SGI Benchmark Survival / Cactus Profile ($21{,}600\text{ s}$ / $6\text{-hour}$ timeout ceiling) across 25 instances comparing Official \texttt{C++ Glasgow}, Official \texttt{C++ FASTiso}, Official \texttt{C-LAD}, Official \texttt{C-VF2}, and Commutator SGI (Ours).}
    \label{fig:cactus}
\end{figure}

\section{Conclusion}
\label{sec:conclusion}

We presented a simple linear framing for graph and subgraph isomorphism, shown most elegantly in the base equations
\[
(I \otimes A - B \otimes I) \cdot \vecop(P) = 0
\]
\[
0 \;\le\; (I_M \otimes A - B \otimes I_N) \cdot \vecop(P) \;\le\; 1
\]
for GI and SGI respectively.

The linear commutator framework provides a clean, unified bridge between algebraic graph theory, discrete lattices, and modern mathematical programming. In addition to offering new directions for theoretical investigation, it also offers a very direct set of methods for quickly resolving graph questions using scalable modern hardware and software via off-the-shelf libraries. Moreover, given the connection to CP-SAT and ILP, it is easy to extend additional constraints or other a priori knowledge to applied SGI without additional overhead or special-purpose software.

The connection to lattices and the Smith normal form was shown to have a rich connection to deep SGI problems of planarity and Hamiltonian cycles that only scratches the surface of what might be possible to prove about general graph properties. We also showed the direct connection between any graph isomorphism problem and the NP-complete shortest and closest vector problems \cite{vanemdeboas1981another,ajtai1998shortest}.

Finally, we demonstrated a clean, fast implementation of the Knight's Tour that is easily implemented and functions as a readable entry point for students and researchers looking to demonstrate these principles at scale.

\bibliographystyle{unsrt}
\bibliography{sgi_paper}

\end{document}